\newtheorem{example}{Example}
\newtheorem{theorem}{Theorem}
\newtheorem{proposition}{Proposition}
\newtheorem{observation}{Observation}
\newtheorem{lemma}{Lemma}
\theoremstyle{definition}
\newtheorem{definition}{Definition}
\newtheorem{fact}{Fact}
\renewcommand{\epsilon}{\varepsilon}
\newcommand{\agents}{\mathcal{N}}
\newcommand{\declarations}{\mathcal{D}}
\newcommand{\SW}{\mathsf{SW}}
\newcommand{\set}[1]{\{#1\}}
\newcommand{\modulus}[1]{\vert #1 \rvert }
\newcommand{\repr}{\textsc{Repr}}
\newcommand{\cut}{Cut}
\newcommand{\noti}{_{-i}}
\newcommand{\OPT}{\textsf{OPT}}
\newcommand{\opt}{\textsf{opt}}
\newcommand{\instance}{\mathcal{I}}
\providecommand{\customgenericname}{}
\newcommand{\newcustomtheorem}[2]{%
	\newenvironment{#1}[1]
	{%
		\renewcommand\customgenericname{#2}%
		\renewcommand\theinnercustomgeneric{##1}%
		\innercustomgeneric
	}
	{\endinnercustomgeneric}
}
\newcommand{\mech}{\mathcal{M}}
\newcommand{\declared}{\mathbf{d}}
\newcounter{mechanismCounter}
\newenvironment{mechanism}
{
	\begin{customMechanism}{\text{$\mech_\arabic{mechanismCounter}$}}
		\addtocounter{mechanismCounter}{1}
	}{\end{customMechanism}
}
\DeclareMathOperator{\argmax}{arg\,max}
\title{Non-Obvious Manipulability in Additively Separable and Fractional Hedonic Games}
\author{Diodato Ferraioli$^{1}$ and Giovanna Varricchio$^2$}
\date{%
	$^1$University of Salerno, Italy\\%
	$^2$University of Calabria,  Italy\\%
		dferraioli@unisa.it, giovanna.varricchio@unical.it
	\\[2ex]%
	\today
}
\begin{document}

\maketitle

\begin{abstract}
In this work, we consider the design of Non-Obviously Manipulable (NOM) mechanisms, mechanisms that bounded rational agents may fail to recognize as manipulable, for two relevant classes of succinctly representable Hedonic Games: Additively Separable and Fractional Hedonic Games. In these classes, agents have cardinal scores towards other agents, and their preferences over coalitions are determined by aggregating such scores. 
This aggregation results in a utility function for each agent, which enables the evaluation of outcomes via the utilitarian social welfare.
We first prove that, when scores can be arbitrary, every optimal mechanism is NOM; moreover, when scores are limited in a continuous interval,  there exists an optimal mechanism that is NOM.
Given the hardness of computing optimal outcomes in these settings, we turn our attention to efficient and NOM mechanisms. To this aim, we first prove a characterization of NOM mechanisms that simplifies the class of mechanisms of interest. Then, we design a NOM mechanism returning approximations that asymptotically match the best-known approximation achievable in polynomial time.
Finally, we focus on discrete scores, where the compatibility of NOM with optimality depends on the specific values. 
Therefore, we initiate a systematic analysis to identify which discrete values support this compatibility and which do not.
\end{abstract}

\section{Introduction}
\emph{Hedonic Games} are a well-established model for describing coalition formation \cite{dreze1980hedonic}. In these games, agents have preferences over possible coalitions, and the goal is to compute a ``good'' partition of the agents, which constitutes the outcome of the game.
Two approaches have been considered for the computation of these partitions: a decentralized one, in which agents autonomously choose their coalition; and a centralized one, where agents reveal their preferences to a designer who determines the outcome.
While, in the first approach, it is assumed that the attained partition satisfies some form of stability against either individual or group deviation \cite{Bloch2011,Feldman15,gairing2010computing,Bogomolnaia02,banerjee2001core,elkind2009hedonic,igarashi2016hedonic}, in the second, it is instead necessary to design algorithms, a.k.a., \emph{mechanisms}, incentivizing agents to report their real preferences \cite{flammini2021strategyproof,flammini2022strategyproof,klaus2023core,dimitrov2004enemies,dimitrov2006simple,rodriguez2009strategy,varricchio2023approximate}.

In the design of these mechanisms, the goal is thus to align the scope of the designer, i.e., to compute partitions that satisfy some useful property, with the one of the agents, that is, being assigned to coalitions that they prefer. This alignment is usually achieved by requiring that the algorithm used for computing partitions satisfies a property named \emph{strategyproofness} (SP). Strategyproofness ensures that an agent's best outcome, given the preferences of other agents, is obtained by truthfully reporting her own preferences.
Unfortunately, strategyproofness has been proved to be often a too demanding requirement: it indeed assumes that agents are fully rational, meaning that they would be able to evaluate which report would be the best for \emph{every} possible realization of the game (i.e., for every possible set of preferences expressed by other agents). In some cases, this would be computationally unaffordable due to the exponentially large set of realizations to verify. In other cases, it requires an agent to parse very complex mathematical proofs, which is not feasible for agents with a scarce mathematical background. Moreover, it has been observed that in many settings agents lack contingent reasoning \cite{charness2009origin,esponda2014hypothetical,ngangoue2021learning}, i.e., they are not inclined to execute case analysis, and they make their decision by aggregating different cases.

Moreover, asking for strategyproof mechanisms against fully rational agents usually causes a very bad quality of the computed partitions. In particular,
\citeauthor{ozyurt2009general} [\citeyear{ozyurt2009general}] show that it is in general impossible to have non-dictatorial strategyproof mechanisms. Similar impossibility results have also been provided in more restricted settings and specific coalitions' quality measures \cite{amanatidis2017truthful,brandl2018proving}. For example, \citeauthor{flammini2021strategyproof} [\citeyear{flammini2021strategyproof}] proved that no strategyproof mechanism can return a bounded approximation of the optimal partition, even for games whose preferences are succinctly representable, and the partitions are evaluated with respect to the social welfare.

Hence, it would be very natural to ask whether better partitions can be achieved if one only requires that the mechanism is strategyproof against bounded rational agents or, more precisely, against agents lacking contingent reasoning skills. Recall that these agents are supposed to submit false preferences (i.e., to manipulate) only if this is convenient to them based on an aggregation of the different possible realizations. In particular, very recently, there has been a lot of attention on \emph{Non-Obvious Manipulable} (NOM) mechanisms \cite{troyan2020obvious}, that aggregate the realizations by only considering the best and worst cases. 
Specifically, in NOM mechanisms, an agent does not submit a false report only if: (i) the best outcome (over all possible reports of other agents) she can achieve for this report is not better than the best outcome that the agent can achieve by submitting her real preferences; and (ii) the worst outcome achievable with this report is not better than the worst outcome achievable with the true report. 
This choice for aggregating the possible realizations is motivated by empirical evidence from the failure of people in recognizing profitable manipulations, mainly in school choice and two-sided matching problems \cite{dur2018identifying,pathak2008leveling}.

NOM mechanisms have been recently designed for different problems: school choice, two-sided matching, auctions, bilateral trade \cite{troyan2020obvious}, voting \cite{aziz2021obvious,arribillaga2024obvious}, fair division \cite{ortega2022obvious,psomas2022fair}, and single-parameter domains \cite{archbold2023non_aamas,archbold2023non_ijcai,archbold2024willy}. For many of these, it has been shown that NOM mechanisms can produce much better outcomes than SP mechanisms. Very recently, similar results have been obtained by \citeauthor{flammini2025non} [\citeyear{flammini2025non}] for hedonic games with friends' appreciation preferences, a special class of additively separable hedonic games. 
Hence, the hope is that the same may occur also for more expressive classes of hedonic games.

\paragraph{Our Contribution}
We focus on two well-studied subclasses of hedonic games with succinct preference representation: \emph{Fractional Hedonic Games} (FHG) \cite{aziz2019fractional} and \emph{Additively Separable Hedonic Games} (ASHG) \cite{hajdukova2004coalition}. Here, preferences over coalitions are expressed by utilities computed from the scores assigned by an agent to every other coalition member. This allows evaluating the partition's quality by using \emph{utilitarian social welfare}, the sum of agents' utilities for the outcome.

We first show that, while no SP mechanism may return even a bounded approximation
of the optimal social welfare \cite{flammini2021strategyproof}, every mechanism that returns such an optimum is NOM. Unfortunately, this result does not settle the problem at all for two reasons. First, in both FHGs and ASHGs, computing the social welfare maximizing partition is a computationally hard problem \cite{flammini2022strategyproof,aziz2015welfare}, and hence one would be interested in evaluating the approximation of the social welfare achievable by mechanisms that are both NOM and computationally efficient. Moreover, optimal mechanisms may fail to be NOM as soon as scores are restricted to limited intervals.

We address the first issue by proving that an efficient NOM mechanism that returns an $n$-approximation of the optimal social welfare in ASHGs, where $n$ is the number of agents, and a $2$-approximation in FHG exists. We stress that for ASHGs the bound is tight, as an approximation better than $n$ cannot be computed in poly-time, unless P=NP, even if one does not care about manipulability \cite{flammini2022strategyproof}.

On the route to achieve this, we prove another result that may be of independent interest: We show that when one cares about NOM mechanisms with bounded approximation, it is without loss of generality to look at ``natural'' mechanisms, i.e., mechanisms that determine whether two agents must belong to the same coalition, only by considering how much they like/dislike each other (in total), regardless of the specific mutual scores.
This characterization also allows us to show that, while  
not every optimal mechanism is NOM, there always exists an optimal mechanism that is NOM, even if scores are bounded in a continuous interval. 

In turn, such a result no longer holds true as soon as we turn the attention to discrete values: for example, in \cite{flammini2025non} it has been proven that in ASHGs, if scores are in $\set{-n,1}$, then no optimal mechanism is NOM. In this work, we elaborate more on these settings by providing an almost full characterization of the existence of optimal NOM mechanisms for ASHGs when scores can take values in $\{-x, 0, 1\}$, with $x\geq 1$. Interestingly, we show 
that, when $x > 2n-3$, every optimal mechanism is NOM, whereas no optimal mechanism is NOM for $x\in (1, 2n-3)$. However, for $x=1,2n-3$, NOM is possibly compatible with optimality, delineating that there is not a clear phase transition over the values of $x$.

\section{Notation and Preliminaries}
This section presents the notation and preliminary results.

\subsection{Hedonic Games}
In \emph{Hedonic Games} (HGs) we are given a set of $n$ agents, denoted by $\agents=\set{1,\dots,n}$, and the goal is to split them into disjoint coalitions, that is, to provide a \emph{partition} $\pi=\set{C_1,\dots, C_k}$ such that $\cup_{h=1}^k C_h = \agents$. Such a partition $\pi$ is also called \emph{outcome} or \emph{coalition structure}. We might denote by $\pi(i)$ the coalition $i\in\agents$ is assigned to in a partition $\pi$. The \emph{grand coalition} refers to a partition that consists of one coalition containing all the agents; a \emph{singleton coalition} refers to a coalition containing only one agent.
The main characteristic of HGs is that agents have preferences over the possible outcomes that solely depend on the coalition they are assigned to and not on how the others aggregate. We denote the set of all possible coalitions of $i$ by $\agents_i= \set{C\subseteq \agents \vert i\in C}$.

Of particular interest are HGs classes where agents express their preferences using utilities and admit succinct representation, meaning the memory used to store the necessary information to compute agents' preferences is polynomial in the instance size. In fractional (FHGs) and additively separable (ASHGs) HGs, agents report individual scores towards other agents, and the utility they derive from the coalition they are assigned to is based on the value they attribute to the other coalition members.
More formally, each agent $i$ has a weight function $w_i:\agents \rightarrow \mathbb{R}$ that determines the score that agent $i$ assigns to any other participant. We always assume that $w_i(i)=0$ and denote $w_{ij}=w_i(j)$.
\\
In ASHGs and FHGs, the \emph{utility} that agent $i$ derives in being in a coalition $C\in \agents_i$, is given by 
\begin{align*}
   u_i(C)= \sum_{j\in C} w_{ij} \;\;  \text{ and } \;\;  u_i(C)= \frac{\sum_{j\in C} w_{ij}}{\modulus{C}}\, ,  \text{ respectively.}
\end{align*}

Note that $u_i(C)$ is well-defined only if $C \in \agents_i$.

In general, $w_{ij}$ can be any value in $\mathbb{R}$. Nonetheless, we may also consider restrictions to specific settings. Specifically, we say that the agents' weights towards the other agents are:
\begin{itemize}
    \item \emph{arbitrary} if $w_{ij}\in\mathbb{R}$, for all $i,j\in\agents$;
    \item \emph{non-negative} if $w_{ij}\in\mathbb{R}_{\geq 0}$, for all $i,j\in\agents$;
    \item \emph{bounded} if $w_{ij}\in [-1,1]$, for all $i,j\in\agents$;
    \item \emph{general duplex} if $w_{ij}\in \set{-x,0,1}$, for all $i,j\in\agents$ and $x\in (0, \infty)$ (in case $x=1$ we simply say {\em duplex}).
\end{itemize}
The distinction in arbitrary, non-negative, bounded, and duplex weights has been presented in \cite{flammini2021strategyproof}. We introduce the notion of general duplex valuations to include the well-known class of friends and enemies games \cite{dimitrov2004enemies}. 
If we do not specify otherwise, we assume the values are arbitrary real numbers.

\paragraph{Graph Representation and Flattened Graph.}
ASHGs and FHGs have a very convenient representation by means of directed and weighted graphs
$\Vec{G} = (\agents, \set{w_i}_{i \in \agents})$, where $\agents$ are the vertices and a directed edge $(i, j)$ has weight $w_{ij}$, denoting the value $i$ has for agent $j$. We always assume the graphs to be complete: an edge exists even if its weight is $0$.

Given two disjoint coalitions $C_1, C_2 \subseteq \agents$, we refer to the \emph{cut} $(C_1,C_2)$ induced in the graph $\Vec{G}$ by these two coalitions, that is,  $\cut(C_1,C_2)=\set{(i,j), (j,i) \, \vert \, i\in C_1, \, j\in C_2}$. The value of the cut $\cut(C_1,C_2)$ is $\sum_{ i\in C_1, \, j\in C_2} \left( w_{ij} + w_{ji}\right)$.  
Moreover, we denote by $\delta^+(i)=\set{j\neq i \,\vert\, w_{ij}> 0}$ and $\delta^-(i)=\set{j\neq i \,\vert\, w_{ij}<0}$ the set of positive and negative neighbors of $i$, respectively.

When focusing on the utilitarian social welfare of a partition, given a pair of agents $i,j$, the values  $w_{ij}$ and $w_{ji}$ either both contribute to the social welfare or neither does. Therefore, we may consider an undirected version of $\Vec{G}$ where the edge weight of an edge is given by the sum of the mutual values.  
More formally, we say that a weighted and undirected graph $G=(\agents, \hat{w})$ is the \emph{flattened graph} of a directed and weighted graph $\Vec{G}$, if it has the same vertices of $\Vec{G}$ and, for each pair of directed edges $(i,j)$ and $(j,i)$ of weights $w_{ij}$ and $w_{ji}$ in $\Vec{G}$, respectively, there is an undirected edge $e=\set{i,j}$ of weight $\hat{w}(e)=\hat{w}(i,j)=w_{ij}+ w_{ji}$ in $G$. Notice that an ASHG or FHG instance is uniquely determined by the directed graph $\Vec{G}$ we described above; however, different instances may have the same flattened graph $G$. 

\paragraph{Proportional Graphs.}
Given two undirected and weighted graphs $G$ and $G'$ with the same set of vertices, and having edge weights $\hat{w}$ and $\hat{w}'$, we say that $G$ and $G'$ are \emph{proportional} if there exists $\lambda > 0$ such that $\hat{w}(i,j)= \lambda \cdot \hat{w}'(i,j)$ holds true for each pair of vertices $i\neq j$.  
Notice that proportionality is an equivalence relation over the set of weighted and undirected graphs; we write $G'\sim_PG''$ to denote that the two graphs are proportional. 
We often slightly abuse notation, and we will simply say that two instances, namely $\instance, \instance'$, are proportional, and write $\instance\sim_P\instance'$, if the flattened graphs modeling these instances are proportional.
In \Cref{sec:characterization}, proportionality will turn out to be a useful property to characterize mechanisms resistant to obvious manipulations.

\subsection{Strategyproofness and Manipulability}
In order to compute desirable partitions of the agents, we need to know their preferences. However, the values $w_{ij}$ might be private information of agent $i$, so agents must reveal them. 
Let $\declared = (d_1, \dots, d_n)$ be the vector of agents' declarations, where the declaration $d_i$ contains the information revealed by agent $i$: by assuming direct revelation, $d_i = \{d_{ij}\}_{j \in \agents \setminus \{i\}}$, where $d_{ij}$ represents the value that agent $i$ declares for agent $j$. We denote by $\declarations$ the space of all possible declarations.  For convenience, $\declared_{-i}$ denotes the declarations of all agents except $i$, and $\declarations_{-i}$ is the set of all feasible $\declared_{-i}$. Similarly, we denote by $\declarations_i$ the space of feasible $d_i$.

We shall denote by $\mech$ a mechanism and by $\mech(\declared)$ its output upon a declaration $\declared\in\declarations$ of the agents. 
We denote by $\mech_i(\declared)$ the coalition $i$ is assigned to.
Being our agents strategic, an agent $i$ could possibly declare $d_i\neq w_i$, where $w_i$ contains the real values assigned by agent $i$, and will be henceforth also called \emph{real type}.  
In this context, a key challenge is to design algorithms -- commonly referred to as \emph{mechanisms} -- encouraging agents to behave truthfully, i.e., to submit a declaration $d_i = w_i$ where she reveals her true values towards the other agents.
The most desirable and widely studied characteristic for such a kind of mechanism is \emph{strategyproofness}.

\begin{definition}[Strategyproofness]\label{def:SP}
    A mechanism $\mech$ is said to be \emph{strategyproof} (SP) if for each $i\in\agents$, and any declaration of the other agents $\declared_{-i}$
    \begin{equation}\label{eq:strategyproofness}
         u_i(\mech(w_i, \declared_{-i})) \geq u_i(\mech(d_i, \declared_{-i}))
    \end{equation}
    holds true for any possible declaration $d_i\neq w_i$ of agent $i$.

    In turn, a mechanism is said to be \emph{manipulable} if there exists an agent $i$, a real type $w_i$, and a declaration $d_i\neq w_i$ s.t.\ Eq.~\ref{eq:strategyproofness} does not hold true. Such a $d_i$ is called a \emph{manipulation}.
\end{definition}

Due to its demanding requirement, strategyproof mechanisms have been shown to be highly inefficient; for this reason, mechanisms satisfying milder conditions have been introduced with the scope of obtaining more efficient outcomes.

\begin{definition}[Non-Obvious Manipulability]\label{def:NOM}
    A mechanism $\mech$ is said to be \emph{non-obviously manipulable} (NOM) if for every $i\in\agents$, real type $w_i$, and any other declaration $d_i$ the following two conditions hold true:
    \begin{enumerate}
        \item\label{NOM:sup} $\sup\limits_{\declared_{-i}} u_i(\mech(w_i, \declared_{-i})) \geq \sup\limits_{\declared_{-i}} u_i(\mech(d_i, \declared_{-i}))$, and 
        \item\label{NOM:inf} $\inf\limits_{\declared_{-i}} u_i(\mech(w_i, \declared_{-i})) \geq \inf\limits_{\declared_{-i}} u_i(\mech(d_i, \declared_{-i}))$.
    \end{enumerate}
In case there exist $i$, $w_i$, and $d_i$ such that one of the aforementioned conditions is violated, then $\mech$ is \emph{obviously manipulable} and $d_i$ is an \emph{obvious manipulation}.
\end{definition}

In other words, to determine if a mechanism $\mech$ is NOM for an agent $i$, we need to consider the best- and the worst-case scenario for $i$ (according to her truthful preferences) of $\mech$ when $i$ declares her true type $w_i$ and any other possible declaration $d_i\neq w_i$. In particular, neither the best nor the worst possible outcome when declaring $w_i$ can be strictly worse than the best or the worst outcome, according to her true preferences, attainable when declaring $d_i$.

Clearly, for any game class, SP $\Rightarrow$ NOM. That is, SP is more demanding than NOM as it requires verifying for every declaration of the others if there exists a profitable manipulation. However, the agents might not know upfront what the others' declarations are. This is circumvented by NOM.

In order to work with non-obvious manipulability, it is necessary to understand the structure of outcomes guaranteeing such a property. We here provide a first result in this direction: a simple but still useful sufficient condition for a mechanism being NOM. A finer characterization will be given later.

To this aim, let us introduce some necessary notation. 
Given a mechanism $\mech$, for any $d_i\in\declarations_i$, let 
\[
Out_i^\mech(d_i)= \set{\mech(d_i, \declared_{-i}) \, \vert \, \declared_{-i} \in \declarations_{-i}} 
\]
be the set of possible outcomes if $i$ declares $d_i$. Moreover,
\[
Coal_i^\mech(d_i)= \set{\mech_i(d_i, \declared_{-i}) \, \vert \, \declared_{-i} \in \declarations_{-i}} 
\]
is the set of possible coalitions for $i$ if she declares $d_i$.

\begin{proposition}\label{obs:sameSetIsNOM}
    Given a mechanism $\mech$, if for each $i\in \agents$ and $d_{i}\in\declarations_{i}$, given the truthful declaration $w_i$,
    \begin{enumerate}
    \item\label{item:Coal} $Coal_i^\mech(w_i)= Coal_i^\mech(d_i)$ or
    \item\label{item:Out}  $Out_i^\mech(w_i)= Out_i^\mech(d_i)$,
    \end{enumerate}   
    then, $\mech$ is NOM.
\end{proposition}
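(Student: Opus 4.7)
The plan is to exploit the defining feature of hedonic games: agent $i$'s utility depends only on the coalition $\pi(i)$ she ends up in, not on how the remaining agents are partitioned. This lets us rewrite the quantities appearing in the NOM conditions in terms of the reachable coalitions or outcomes for $i$, turning both conditions into comparisons between sup/inf of $u_i$ over two sets.

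Concretely, I would fix $i\in\agents$, her truthful type $w_i$, and an arbitrary alternative declaration $d_i\in\declarations_i$. The first step is to observe that, for any declaration $e_i\in\declarations_i$,
\[
\sup_{\declared_{-i}} u_i(\mech(e_i,\declared\noti)) = \sup_{O\in Out_i^\mech(e_i)} u_i(O) = \sup_{C\in Coal_i^\mech(e_i)} u_i(C),
\]
and the analogous identity for the infimum. The first equality is just the definition of $Out_i^\mech(e_i)$, while the second uses that in hedonic games $u_i(O)$ coincides with $u_i(\mech_i(e_i,\declared\noti))$, so the value only depends on $i$'s coalition.

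From here the two hypotheses are handled symmetrically. Under hypothesis~\ref{item:Coal}, $Coal_i^\mech(w_i)=Coal_i^\mech(d_i)$, so both sides of each NOM inequality are the sup (resp.\ inf) of $u_i$ over the \emph{same} set of coalitions and thus coincide, giving equality in both conditions of \Cref{def:NOM}. Under hypothesis~\ref{item:Out}, the equality $Out_i^\mech(w_i)=Out_i^\mech(d_i)$ implies equality of the image sets $\{u_i(O):O\in Out_i^\mech(\cdot)\}$ by the coalition-only dependence argument above, so again the sups and infs coincide. (Alternatively, since $\mech_i$ is a function of $\mech$, hypothesis~\ref{item:Out} directly entails hypothesis~\ref{item:Coal}, reducing the second case to the first.)

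I do not expect any real obstacle: the proposition is essentially an unpacking of definitions once the coalition-only dependence of $u_i$ is used to swap $\sup_{\declared\noti}$ for a sup over $Coal_i^\mech$ or $Out_i^\mech$. The only care needed is to make explicit that the sup/inf over $\declared\noti$ and over the corresponding reachable sets agree, which is immediate from the definitions of $Out_i^\mech$ and $Coal_i^\mech$.
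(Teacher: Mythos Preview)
Your proposal is correct and follows essentially the same approach as the paper: both arguments observe that equality of the reachable coalition sets forces the sup and inf in \Cref{def:NOM} to coincide, and both reduce hypothesis~\ref{item:Out} to hypothesis~\ref{item:Coal} via the implication $Out_i^\mech(w_i)=Out_i^\mech(d_i)\Rightarrow Coal_i^\mech(w_i)=Coal_i^\mech(d_i)$. Your write-up is slightly more explicit in unpacking the sup/inf rewriting, but the underlying idea is identical.
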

\begin{proof}
  Suppose first that $Coal_i^\mech(w_i)= Coal_i^\mech(d_i)$. That is, whatever the declaration of agent $i$ is, the set of coalitions $i$ may end up with remains the same. Then, the best/worst outcome remains the same as well, and the conditions (1) and (2) in \Cref{def:NOM} are always satisfied with the equality, and thus NOM holds true.
  Note that $Out_i^\mech(w_i)=Out_i^\mech(d_i)$ implies $ Coal_i^\mech(w_i)= Coal_i^\mech(d_i)$, and thus the thesis follows even for the condition~\ref{item:Out}.
\end{proof}
 Proposition~\ref{obs:sameSetIsNOM} provides two simple yet useful sufficient conditions for determining whether a mechanism is NOM.
 
\subsection{Optimal and Approximate Solutions}
Being the agents' preferences expressed by means of utilities, we can evaluate the quality of outcomes through a \emph{social welfare} (SW). Specifically, let $\pi$ be any outcome;  the (utilitarian) SW of $\pi$ is given by $\SW(\pi)= \sum_{i\in\agents} u_i(\pi(i))$.
We call \emph{social optimum}, or simply the optimum, any outcome $\OPT$ in $\argmax_{\pi}\SW(\pi)$ and we denote by $\opt$ the value $\SW(\OPT)$.

In the considered classes of games, computing an optimal outcome is often intractable. Therefore, we may focus on finding approximate solutions. 

Let $\opt_\declared$ the value of the social optimum for the declaration $\declared$, i.e., $\opt_\declared = \max_{\pi} \sum_{i \in \agents} \sum_{j \in \pi(i)} d_{ij}$ for ASHG, and $\opt_\declared = \max_{\pi} \sum_{i \in \agents} \frac{\sum_{j \in \pi(i)} d_{ij}}{\pi(i)}$ for FHG. 

\begin{definition}[Bounded approximation property (BAPX)]
 A mechanism $\mech$ satisfies the {\em bounded approximation property} (BAPX) if
 \[
 1 \leq \sup\limits_{\declared\in\declarations} \frac{\opt_\declared}{\SW(\mech(\declared))} \leq r(n)
 \]
 holds true for a real-valued and bounded function $r(n)$. We denote by $r^\mech$ the approximation ratio of a mechanism $\mech$.

 If $\opt_\declared=0$ the only bounded approximation mechanism is the one returning the optimum.\footnote{In this case, we assume ``$\frac{0}{0}=1$''.}
\end{definition}

We ask our mechanisms to return solutions that guarantee a non-negative $\SW$, since there is at least one non-negative solution (namely, each agent in a singleton), and hence also the optimum is non-negative.

A mechanism that for any input instance returns an optimum partition is called {\em optimal}.

As we already mentioned, strategyproofness turned out to be extremely inefficient in terms of approximation to the social optimum. For the sake of completeness, we include the next theorem, which is a restatement of results from \cite{flammini2021strategyproof}. 

\begin{proposition}\label{obs:OptNotSP}
    For both ASHGs and FHGs with arbitrary weights, no SP mechanism is BAPX.
\end{proposition}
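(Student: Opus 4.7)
The plan is to prove the claim by contradiction: assume that $\mech$ is SP and BAPX with approximation function $r=r(n)$. First, I would reduce to the case $n=2$ by padding with $n-2$ dummy agents having zero weights to and from everyone. Since such agents contribute nothing to any partition's social welfare, they do not affect whether BAPX holds, and the optimum still equals the $2$-agent optimum on $\{1,2\}$; hence the impossibility for $n=2$ immediately lifts to all $n\ge 2$.

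For $n=2$ the only partitions are the grand coalition, of social welfare $w_{12}+w_{21}$ in ASHGs (or $(w_{12}+w_{21})/2$ in FHGs), and the singleton partition, of social welfare $0$. I would build two instances differing only in the weight $w_{21}$. In $\instance_1$, set $w_{12}=M$ and $w_{21}=-1$ where $M>r(n)+1$: then $\opt=M-1>0$ and only the grand coalition achieves a positive social welfare, so BAPX forces $\mech$ to return it. In $\instance_2$, set $w_{12}=M$ and $w_{21}=-M-1$: here $\opt=0$, the grand coalition has strictly negative social welfare, and the non-negativity stipulation (together with the convention $0/0=1$) forces $\mech$ to return the singleton partition.

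The contradiction comes from considering agent $2$ whose true type is $w_{21}=-1$, with agent $1$ truthfully declaring $d_{12}=M$. Reporting truthfully triggers $\instance_1$, placing agent $2$ in the grand coalition with utility $-1$ (or $-1/2$ in FHGs). Misreporting $d_{21}=-M-1$ instead triggers $\instance_2$, where she ends up alone with utility $0$. Since $0$ strictly exceeds her truthful utility, this is a beneficial misreport, violating SP.

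The main obstacle is to carefully verify that BAPX together with the non-negativity requirement actually pins down $\mech$'s behavior on both instances. Choosing $M>r(n)+1$ addresses the dependence of $r$ on $n$ uniformly, and the padding argument is valid because adding zero-weight agents preserves both the social welfare of any partition and the optimum of the $\{1,2\}$-subinstance, regardless of how dummies are grouped. The construction works identically for ASHGs and FHGs because the sign of agent $2$'s utility in the grand coalition depends only on the sign of $w_{21}$.
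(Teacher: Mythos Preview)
Your proposal is correct and follows essentially the same approach as the paper: a two-agent construction in which BAPX forces the grand coalition on one instance (giving agent~$2$ strictly negative utility) and forces the singleton partition after agent~$2$ misreports (giving her utility~$0$), yielding a profitable deviation. The only differences are cosmetic---the paper takes $w_{12}=1$, $w_{21}=-\epsilon$ and misreport $-M$, whereas you take $w_{12}=M$, $w_{21}=-1$ and misreport $-M-1$---and you are in fact more explicit than the paper about the padding to general $n$ and the dependence of $M$ on $r(n)$.
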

\begin{proof}
The following arguments apply to both ASHGs and FHGs.

Consider a game instance $\instance$ with two agents and weights $w_1(2) = 1$ and $w_2(1) = -\epsilon$, where $\epsilon \ll 1$. The graph corresponding to this instance is depicted in \Cref{fig:noBAPXisSP1}. In $\instance$, a BAPX is possible only if both agents are placed in the same coalition. However, in such an outcome, agent $2$ has a negative utility.

Assume now that agent $2$ misreports her value towards $1$ by declaring $w_2(1) = -M$, as shown in \Cref{fig:noBAPXisSP2}, where $M \gg 1$. Let us call the new instance $\instance'$. In $\instance'$, BAPX is satisfied only if the agents are placed in singletons, guaranteeing agent $2$ a utility of $0$. 

In conclusion, if a mechanism $\mech$ is BAPX, in instance $\instance$, it is more convenient for agent $2$ to change the instance into $\instance'$. Therefore, we can conclude $\mech$ is not SP and the thesis follows.
\end{proof}

Fortunately, NOM stands in contrast to this impossibility result. We will demonstrate that NOM is, in fact, compatible with optimality, and that any approximation algorithm with BAPX can be transformed into a NOM mechanism that preserves the same approximation guarantee.

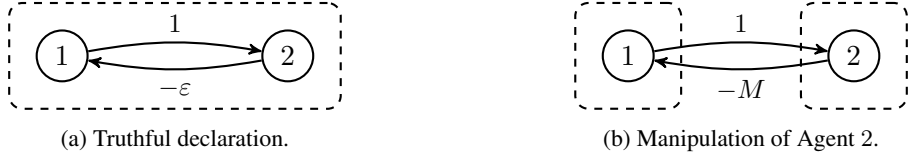
\begin{figure}[t!]%
	\def \variable {3cm}
    \centering
	\begin{subfigure}[t]{0.45\textwidth}
		\centering
		\begin{tikzpicture}[->,>=stealth',auto,node distance=\variable,
		thick,main node/.style={circle,draw}, box/.style = {draw,dashed,inner sep=10pt,rounded corners=5pt}]
		\node[main node] (1) {$1$};
		\node[main node] (2) [right of=1] {$2$};
            \node[box, fit=(1)(2)] {};
		\path[every node/.style={font=\sffamily\small}]
		(1) edge[bend left=10] node {$1$} (2)
		(2) edge [bend left=10] node {$-\epsilon$} (1);
		\end{tikzpicture}
		\subcaption{Truthful declaration.}
		\label{fig:noBAPXisSP1}
	\end{subfigure}
	\begin{subfigure}[t]{0.45\textwidth}
		\centering
		\begin{tikzpicture}[->,>=stealth',auto,node distance=\variable,
		thick,main node/.style={circle,draw}, box/.style = {draw,dashed,inner sep=10pt,rounded corners=5pt}]
		\node[main node] (1) {$1$};
		\node[main node] (2) [right of=1] {$2$};
            \node[box, fit=(1)] {};
            \node[box, fit=(2)] {};
		\path[every node/.style={font=\sffamily\small}]
		(1) edge[bend left=10] node {$1$} (2)
		(2) edge [bend left=10] node {$-M$} (1);
		\end{tikzpicture}
		\subcaption{Manipulation of Agent $2$.}
		\label{fig:noBAPXisSP2}
	\end{subfigure}
	\caption{No BAPX is SP. Dashed lines represent the only BAPX partition for the given instance. }
	\label{fig:noBAPXisSP}
\end{figure} 

\section{Optimality and NOM}
While strategyproofness is incompatible with \emph{any} bounded approximation of the optimum (see \Cref{obs:OptNotSP}), in this section, we show that non-obvious manipulability is instead compatible with optimality, at least for arbitrary weights. 

\begin{theorem}\label{thm:arbitraryASHGsOPTisNOM}
    For both ASHGs and FHGs with arbitrary weights, any optimal mechanism is NOM.
\end{theorem}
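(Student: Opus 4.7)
The plan is to establish the stronger claim that, for every true type $w_i$ and every declaration $d_i \in \declarations_i$,
\[
\sup_{\declared_{-i}} u_i(\mech(d_i,\declared_{-i})) = u_i^* \qquad\text{and}\qquad \inf_{\declared_{-i}} u_i(\mech(d_i,\declared_{-i})) = u_i^-,
\]
where $u_i^* := \max_{C\in\agents_i} u_i(C)$ and $u_i^- := \min_{C\in\agents_i} u_i(C)$ are evaluated at the true weights $w_i$. Since these quantities depend only on $w_i$, both inequalities in \Cref{def:NOM} then hold with equality for every $d_i$, so $\mech$ is NOM. The upper bound $\sup\leq u_i^*$ and the lower bound $\inf\geq u_i^-$ are immediate from $u_i^-\leq u_i(\pi(i))\leq u_i^*$ for every partition $\pi$.

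For the matching reverse bounds, I would exhibit, for every target coalition $C\in\agents_i$, a specific profile $\declared_{-i}$ forcing every optimal mechanism to place $i$ in $C$, independently of $d_i$. Pick $M$ much larger than all $|w_{jk}|$ and all $|d_{ij}|$; this is where the arbitrariness of the weights is used. For $j\in C\setminus\{i\}$, set $d_{jk}=M$ for every $k\in C\setminus\{j\}$ and $d_{jk}=-M$ for every $k\notin C$; for $j\notin C$, set $d_{jk}=-M$ for every $k\in C$ and $d_{jk}=0$ otherwise. In the degenerate case $C=\{i\}$, the construction reduces to setting $d_{ji}=-M$ for all $j\neq i$, which forces $i$ into a singleton. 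A direct computation of declared social welfares shows that the partition $\pi^*$ grouping exactly $C$ together (with outsiders arranged arbitrarily) dominates every other partition by $\Omega(M)$: splitting $C$ destroys $\Omega(M)$ worth of internal $M$-edges, while attaching any outsider to $C$ introduces $\Omega(M)$ worth of $-M$-edges. The reasoning is structurally identical for ASHGs (declared SW is the sum of declared edge weights inside coalitions) and for FHGs (enlarging a uniform-$M$ coalition only raises each member's declared share), so a single argument handles both classes. Since $d_i$ contributes only an $O(1)$ additive perturbation, for $M$ large enough every optimal partition on $(d_i,\declared_{-i})$ assigns $i$ to $C$, whence $u_i(\mech(d_i,\declared_{-i}))=u_i(C)$.

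Applying this construction with any $C\in\argmax_{C\in\agents_i} u_i(C)$ yields $\sup=u_i^*$, and with any $C\in\argmin_{C\in\agents_i} u_i(C)$ yields $\inf=u_i^-$, completing the proof. The main obstacle is the declared-SW accounting certifying that $\pi^*$ beats every alternative by $\Omega(M)$ regardless of $d_i$; for FHGs a small extra check is required because of the division by coalition size, but the uniform-$M$ edges inside $C$ still deliver an $\Omega(M)$ advantage per coalition member. This argument breaks as soon as the weights are restricted to a bounded interval (where $M$ cannot be scaled against $d_i$), which foreshadows why later sections need more delicate techniques in the bounded and discrete-weight regimes.
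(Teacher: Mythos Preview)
Your proposal is correct and follows essentially the same approach as the paper: for any target coalition $C\in\agents_i$ and any declaration $d_i$, both arguments construct a profile $\declared_{-i}$ with large weights $\pm M$ that forces every optimal partition to place $i$ exactly in $C$, thereby showing that the set of coalitions reachable by $i$ is all of $\agents_i$ regardless of $d_i$. The only cosmetic difference is that the paper packages the conclusion via \Cref{obs:sameSetIsNOM} (showing $Coal_i^\mech(d_i)=\agents_i$), whereas you spell out directly that $\sup$ and $\inf$ equal $u_i^*$ and $u_i^-$ independently of $d_i$.
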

\begin{proof}
The next arguments hold for both ASHGs and FHGs.

Let us fix an agent $i$, and let $C\in \agents_i$. We next show that for any $d_i\in\declarations_i$ there exists $\declared\noti\in \declarations\noti$ such that, in any optimal partition $\pi$ of instance $\declared=(d_i, \declared\noti)$, it holds $\pi(i)=C$.

Since the weights range in $\mathbb{R}$, for any $d_i$ we can select $\declared\noti\in \declarations\noti$ in such a way that the flattened graph of the instance, $G=(\agents, \hat{w})$, has the weights defined as follows: $\hat{w}(e)= 1$, for all $e=(j,j')$ with $j,j'\in C$, and $\hat{w}(e)= -M$, for a sufficiently large $M$, otherwise. 

Let $\pi$ be any optimal partition for such an instance, and suppose that there is $j \in \pi(i)$  such that $j\not\in C$; then, $\SW(\pi(i)\setminus\set{j}) - \SW(\pi(i)) < 0$, for an $M$ large enough. Being such a difference negative, it would be preferable to move $j$ to a singleton coalition, contradicting the optimality of $\pi$. Therefore, $\pi(i)\subseteq C$.

Assume there exist two disjoint coalitions in $\pi$, $C', C''$, both contained in $C$. Being $\pi$ optimum, $\Delta= \SW(C'\cup C'')- \SW(C') - \SW(C'') \leq 0$, as it must not be strictly more convenient to merge these two coalitions for the $\SW$. 
However, in the case of ASHGs, $\Delta$ measures the total weight of the cut $(C',C'')$. Being this cut made of positive edges only, we have $\Delta>0$, a contradiction to the optimality of $\pi$.
In FHGs, $ \Delta= \left( \modulus{C'\cup C''} - 1\right) - (\modulus{C'}-1) - (\modulus{C''}-1)>0$, again a contradiction to the optimality of $\pi$.

In conclusion, for any optimal $\mech$, $Coal_i^\mech(d_i) =\agents_i$, for each $d_i\in\declarations_i$, including the truthful declaration $w_i$ of $i$. Therefore, the condition \ref{item:Coal} of \Cref{obs:sameSetIsNOM} is satisfied by any optimal mechanism, and our claim follows.
\end{proof}

\Cref{thm:arbitraryASHGsOPTisNOM} poses NOM in sharp contrast with SP. Unfortunately, such a result does not hold for any class of weights as the following example shows.

\begin{example}\label{ex:notEveryOPTisNOM}
    Assume weights are bounded. The following arguments apply to both ASHGs and FHGs.
    
    Let $\mech$ be an optimal mechanism that, whenever there are two agents and $d_{12}+ d_{21}=0$, works as follows: if $d_{12}=-1$, the mechanism puts the two agents in distinct coalitions, otherwise, the two agents are put in the same coalition. Note that if $d_{12}+ d_{21}=0$, any partition of the two agents is optimal.
    
    If the true type $w_1$ of agent $1$ is such that the score towards agent $2$ is $-x$, with $x<1$, then, $Out_1^\mech(w_1)= \set{\set{\agents}, \set{\set{1}, \set{2}} }$. In fact, whenever $d_{21} >x$ (which is possible as long as $x<1$) the only optimum is achieved by putting the agents in the same coalition; in turn, if $d_{21}<x$ the optimum must split the agents in separate coalition. 
    
    Assume agent $1$ misreports by declaring $d_1$ where $d_{12}=-1$. In this case, $Out_1^\mech(d_1)= \set{\set{\set{1}, \set{2}}}$. In fact, $d_{12}+ d_{21}\leq 0$ as  $d_{21}\leq 1$, and hence, if $d_{12}+ d_{21} <0$ any optimal outcome will put the agents in singletons, otherwise, the mechanism will apply the tie-breaking rule described above, showing that the only possible partition is $\set{\set{1}, \set{2}}$.
    Therefore, this manipulation improves the worst case for $1$, proving that the optimal mechanism $\mech$ is not NOM.
\end{example}

Even though optimal mechanisms may not necessarily be NOM, in the next section, we provide a characterization of BAPX and NOM mechanisms. As a result, this will show that there always exists an optimal and NOM mechanism, both for bounded and non-negative weights.

\section{BAPX and NOM mechanisms}
In the previous section, we observed that while all optimal mechanisms are NOM, when weights range in $\mathbb{R}$, this is not necessarily the case when weights are bounded. However, looking back to \Cref{ex:notEveryOPTisNOM}, we highlight that some strange behavior of the mechanism caused its obvious manipulability. In particular, such a mechanism may return different optimal outcomes for two proportional instances (instances having the same flattened graph). 

We notice that, given any outcome $\pi$, whenever two agents $i,j$ are in the same coalition, both $w_{ij}$ and $w_{ji}$ contribute to the social welfare; conversely, if they are not in the same coalition, neither $w_{ij}$ nor $w_{ji}$ contributes to it. For this reason, the optimality of an outcome is \emph{scale-independent}, that is, if an outcome is optimum, it remains optimum for a proportional instance. Leading to the following observation. 
\begin{observation}\label{obs:propSameOPT}
    In ASHGs and FHGs, proportional instances have the same optimal outcomes.
\end{observation}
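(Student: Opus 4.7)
The plan is to show that, for both ASHGs and FHGs, the social welfare $\SW(\pi)$ of any partition $\pi$ depends only on the weights $\hat{w}$ of the flattened graph, and does so in a way that is linearly homogeneous in $\hat{w}$. Once this is established, proportionality (which scales $\hat{w}$ by a fixed $\lambda > 0$) scales $\SW$ uniformly across all partitions, so the maximizers coincide.

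First I would rewrite $\SW(\pi)$ coalition-by-coalition. For any coalition $C$, the directed-weight contribution $\sum_{i \in C}\sum_{j \in C \setminus \{i\}} w_{ij}$ pairs $w_{ij}$ with $w_{ji}$ for every unordered pair $\{i,j\} \subseteq C$, so it equals $\sum_{\{i,j\} \subseteq C} \hat{w}(i,j)$. Hence, for ASHGs,
\begin{equation*}
\SW(\pi) \;=\; \sum_{C \in \pi} \sum_{\{i,j\} \subseteq C} \hat{w}(i,j),
\end{equation*}
and for FHGs, dividing each coalition's contribution by $|C|$,
\begin{equation*}
\SW(\pi) \;=\; \sum_{C \in \pi} \frac{1}{|C|}\sum_{\{i,j\} \subseteq C} \hat{w}(i,j).
\end{equation*}
In both expressions, $\SW(\pi)$ is a linear function of the flattened-graph weights.

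Now let $\instance \sim_P \instance'$ with scaling constant $\lambda > 0$, i.e., $\hat{w}(i,j) = \lambda \cdot \hat{w}'(i,j)$ for every pair $i \neq j$. Substituting into either of the formulas above yields $\SW_\instance(\pi) = \lambda \cdot \SW_{\instance'}(\pi)$ for every partition $\pi$. Since $\lambda > 0$, this rescaling preserves the ordering of partitions by social welfare, so
\begin{equation*}
\argmax_\pi \SW_\instance(\pi) \;=\; \argmax_\pi \SW_{\instance'}(\pi),
\end{equation*}
which is exactly the claim. No part of this argument looks delicate; the only thing to be careful about is to state the ASHG and FHG formulas explicitly so that the linear dependence on $\hat{w}$ is manifest, which makes the scale-independence immediate.
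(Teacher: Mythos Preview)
Your proof is correct and follows essentially the same approach as the paper: the observation is justified by noting that $w_{ij}$ and $w_{ji}$ either both contribute to $\SW(\pi)$ or neither does, so $\SW(\pi)$ depends linearly on the flattened-graph weights, and scaling by $\lambda>0$ preserves the set of maximizers. Your version is simply more explicit in writing out the ASHG and FHG formulas, but the underlying idea is identical.
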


Therefore, since by \Cref{obs:propSameOPT} proportional instances share the same optimal solutions, it looks reasonable that, when designing an optimal mechanism, one should care only about the flattened graph of the instance, rather than the mutual evaluations of the agents.
For all these reasons, we introduce the property of {\em scale–independence} for a mechanism.

\begin{definition}[Scale-Independence (SI)]
  A mechanism $\mech$ is \emph{scale-independent} (SI) if for any pair of proportional inputs, $\declared, \declared'$, it returns the same outcome, that is, $\mech(\declared)=\mech(\declared')$.  
\end{definition}

SI requires the mechanism to be consistent with the flattened graph, and any graph proportional to it, meaning that the outcome should solely depend on the contribution of pairs of agents to the social welfare, rather than the individual contributions of the agents. This assumption looks quite natural, as the mutual values of a pair of agents contribute to the social welfare only if the agents are in the same coalition.

\subsection{A Sufficient Condition for NOM}
In \Cref{ex:notEveryOPTisNOM}, the presented optimal mechanism is not NOM and does not even satisfy SI. With the next theorem, we further emphasize the importance of considering SI mechanisms as this suffices to guarantee NOM.

\begin{theorem}\label{thm:whenSIisNOM}
For both ASHGs and FHGs with arbitrary, non-negative, or bounded weights, $SI\Rightarrow NOM$.
\end{theorem}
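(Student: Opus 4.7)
The plan is to invoke Proposition~\ref{obs:sameSetIsNOM}, specifically its condition~\ref{item:Out}, by showing that for every agent $i$, every truthful type $w_i$, and every alternative declaration $d_i\in\declarations_i$, one has $Out_i^\mech(d_i)=Out_i^\mech(w_i)$. Since the argument is symmetric in $w_i$ and $d_i$, I will focus on proving $Out_i^\mech(d_i)\subseteq Out_i^\mech(w_i)$: given any $\declared\noti\in\declarations\noti$, the goal is to construct some $\declared'\noti\in\declarations\noti$ with $(d_i,\declared\noti)\sim_P(w_i,\declared'\noti)$, so that SI will force $\mech(d_i,\declared\noti)=\mech(w_i,\declared'\noti)$, placing the outcome on the left into $Out_i^\mech(w_i)$.

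The natural construction fixes a proportionality factor $\lambda>0$ and sets $d'_{jk}=\lambda\, d_{jk}$ for every pair $j,k\neq i$, together with $d'_{ji}=\lambda(d_{ij}+d_{ji})-w_{ij}$ for every $j\neq i$. A direct check shows that the flattened graph of $(w_i,\declared'\noti)$ is exactly $\lambda$ times the flattened graph of $(d_i,\declared\noti)$, so the two instances are proportional by definition. The whole argument then reduces to showing that a suitable $\lambda$ can always be chosen to keep $\declared'\noti$ within the admissible weight range.

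The main obstacle, and the step where the three weight classes diverge, is precisely this choice of $\lambda$. For arbitrary weights the problem is trivial: $\lambda=1$ makes the two flattened graphs literally equal, and every $d'_{ji}\in\mathbb{R}$ is admissible. For bounded weights in $[-1,1]$, I would take $\lambda$ sufficiently small so that $d'_{jk}=\lambda d_{jk}\approx 0$ and $d'_{ji}\approx -w_{ij}$ both lie in $[-1,1]$, using $|w_{ij}|\leq 1$ to produce the slack. For non-negative weights, I would instead take $\lambda$ large enough so that $\lambda(d_{ij}+d_{ji})\geq w_{ij}$ for every $j\neq i$, forcing $d'_{ji}\geq 0$, while $d'_{jk}=\lambda d_{jk}\geq 0$ is automatic. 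I expect the most delicate step to be the boundary corner cases (for instance, $|w_{ij}|=1$ in the bounded setting, or $d_{ij}+d_{ji}=0$ together with $w_{ij}>0$ in the non-negative setting), where the straightforward choice of $\lambda$ collapses and a dedicated perturbation or limit argument is needed. Once $\declared'\noti$ is validated in each class, SI yields $\mech(d_i,\declared\noti)=\mech(w_i,\declared'\noti)$; combined with the symmetric inclusion, this gives $Out_i^\mech(d_i)=Out_i^\mech(w_i)$, and condition~\ref{item:Out} of Proposition~\ref{obs:sameSetIsNOM} finishes the argument that $\mech$ is NOM.
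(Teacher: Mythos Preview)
Your approach is exactly the paper's: reduce to condition~\ref{item:Out} of Proposition~\ref{obs:sameSetIsNOM} by constructing, for every $\declared_{-i}$, some $\declared'_{-i}$ with $(d_i,\declared_{-i})\sim_P(w_i,\declared'_{-i})$ via $d'_{ji}=\lambda(d_{ij}+d_{ji})-w_{ij}$ and $d'_{jk}=\lambda\,d_{jk}$, taking $\lambda=1$ for arbitrary weights, $\lambda$ large for non-negative weights, and (in effect) $\lambda$ small for bounded weights. The boundary configurations you flag are not handled in the paper's proof either; it simply asserts that a suitable $\lambda$ (or, for bounded weights, a post-hoc ``normalization in a proportional way'') can be chosen, without discussing those degenerate cases.
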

\begin{proof}
We start by showing that, for a given space of possible declarations $\declarations$, if
\begin{enumerate}[($\star$)]
    \item\label{item:property} for each $i\in \agents$ and for any $d_i, d'_i \in\declarations_i$, and $\declared_{-i}\in \declarations_{-i}$, there exists $\declared'_{-i}\in \declarations_{-i}$ such that $(d_i, \declared_{-i})\sim_{P}(d'_i, \declared'_{-i})$,
\end{enumerate}
then, for any SI mechanism $\mech$, the condition $Out^\mech_i(d_i)=Out^\mech_i(d_i')$ holds for every pair of declarations $d_i, d'_i \in \declarations_i$, and therefore, by Proposition~\ref{obs:sameSetIsNOM}, $\mech$ is NOM. Indeed, given any outcome $\pi \in Out^\mech_i(d_i)$, there exists $\declared_{-i} \in\declarations_{-i}$ such that $\pi=\mech(d_i,\declared_{-i})$. According to \ref{item:property}, there exists $\declared_{-i}'\in \declarations_{-i}$ such that $(d_i, \declared_{-i})\sim_{P}(d'_i, \declared'_{-i})$. Then, being $\mech$ SI, $\pi=\mech(d_i,\declared_{-i})= \mech(d'_i,\declared'_{-i})$, and thus $\pi \in Out^\mech_i(d'_i)$. 

In the remainder of the proof, we will show \ref{item:property} holds true for each analyzed class of games. 

\medskip \noindent{\em Arbitrary weights.} 
Given an agent $i$, two types $d_i$ and $d'_i$, and any declaration of the other agents $\declared_{-i}$, we define $\declared'_{-i}$ so that $(d_i, \declared_{-i})\sim_{P}(d'_i, \declared'_{-i})$. 

For each $j\in\agents\setminus\set{i}$, we set $d'_{ji} = d_{ij} + d_{ji} - d'_{ij}$ and, for each $j, j'\in\agents\setminus\set{i}$, we set $d'_{jj'}=d_{jj'}$. Notice that this weight setting is feasible for arbitrary weights and the two instances are proportional for $\lambda = 1$.

\medskip \noindent{\em Non-negative weights.}  The transformation we used for arbitrary weights may lead   
to negative values, which of course is now not allowed. However, by selecting $\lambda$ sufficiently large so that $\lambda \cdot (d_{ij} + d_{ji}) - d'_{ij} > 0$, for each $j\in\agents\setminus\set{i}$. We can then set values in the following way: 
For each $j\in\agents\setminus\set{i}$, we set $d'_{ji} = \lambda \cdot (d_{ij} + d_{ji}) - d'_{ij}$ and, for each $j, j'\in\agents\setminus\set{i}$, we set $d'_{jj'}=\lambda\cdot d_{jj'}$.

\medskip \noindent{\em Bounded weights.} We use the same transformation we exploited for arbitrary weights. If the resulting weights are not in $[-1,1]$, we normalize them in a proportional way.
\end{proof}

\subsection{A Characterization of BAPX and NOM}\label{sec:characterization}
Let us now specifically focus on BAPX mechanisms. Notice that since optimal mechanisms may not be SI (see \Cref{ex:notEveryOPTisNOM}), the same holds true for BAPX mechanisms.

We next show that, for any possible approximation ratio $\rho$, there always exists an SI mechanism with approximation factor exactly $\rho$. To this end, we will exploit some interesting properties deriving from the definition of the equivalence class $\sim_{P}$ over game instances.
In particular, we introduce a poly-time procedure which, for any pair of instances $\instance, \instance'$ such that $\instance \sim_{P} \instance'$, selects the same representative for their equivalence class. We call this procedure \repr\ and it works as follows:
Given a game instance $\instance$, it builds the corresponding flattened graph $G\in\mathcal{G}^n$; next, \repr\ transform $G=(\agents,\hat{w})$ into a proportional flattened graph $G'=(\agents,\hat{w}')$ such that the maximum weight in absolute terms in $G'$ equals $1$. This is clearly always possible by normalizing weights by a factor $\max_{i\neq j}{\modulus{\hat{w}(i,j)}}$; finally, by assuming that there is a prefixed ordering of the agents, say $1, \dots, n$, \repr\ returns an instance with agents $\agents$ and weights $w''_i$, for all $i\in\agents$, defined as follows:
\begin{align*}
w''_{ij}=
\begin{cases}
\hat{w}'(i,j) & \text{if} \ i<j \\
0       & \text{ otherwise.}
\end{cases}
\end{align*}
We will call the instance returned by $\repr(\instance)$ as the \emph{representative instance} of $\instance$.

It is not hard to see that $\repr(\instance)\sim_{P} \instance$, and for any pair of proportional instances $\instance, \instance'$ we have $\repr(\instance)=\repr(\instance')$.
Moreover, the procedure \textsc{Repr} works in polynomial time.

\textsc{Repr} turns out to be very helpful in transforming a mechanism into an SI mechanism while maintaining the same approximation guarantee, as the following lemmas show.
\begin{lemma}\label{lemma:compositionReprIsSI}
 For any mechanism $\mech$, the mechanism $\mech'=\mech\circ\repr$\footnote{We denote $f(g(\cdot))$ by $f\circ g(\cdot)$.} is SI.
\end{lemma}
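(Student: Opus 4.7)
The plan is to derive the lemma directly from the definition of SI together with the two properties of \repr\ stated in the paragraph preceding it, namely that $\repr(\instance) \sim_P \instance$ and that $\repr(\instance) = \repr(\instance')$ for any pair $\instance \sim_P \instance'$. Since $\mech'(\declared) = \mech(\repr(\declared))$ by definition, scale-independence of $\mech'$ amounts to showing that $\repr$ takes the same value on any two proportional inputs; once that is in place, applying $\mech$ to the common value $\repr(\declared) = \repr(\declared')$ yields the identity $\mech'(\declared) = \mech'(\declared')$ and the lemma follows.

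The substantive step is therefore to verify carefully the already-stated fact that $\repr$ selects a common representative for every $\sim_P$-class. I would proceed as follows. Take two proportional inputs $\declared \sim_P \declared'$ with ratio $\lambda > 0$, and let $G = (\agents, \hat{w})$ and $G' = (\agents, \hat{w}')$ be their flattened graphs; by definition of $\sim_P$, $\hat{w}(i,j) = \lambda \cdot \hat{w}'(i,j)$ for all $i \neq j$. Taking absolute values and maximizing, $\max_{i \neq j}|\hat{w}(i,j)| = \lambda \cdot \max_{i \neq j}|\hat{w}'(i,j)|$, so the normalization step of \repr, which rescales each flattened graph by the reciprocal of its own maximum absolute weight, produces identical normalized graphs from $G$ and $G'$. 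The final directing step of \repr\ is a deterministic function of this normalized undirected graph (it assigns the whole weight to the edge $(i,j)$ when $i<j$ and zero otherwise, using the fixed ordering of $\agents$), and hence yields the same directed instance in both cases. This gives $\repr(\declared) = \repr(\declared')$, as needed.

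Chaining the two observations,
\[
\mech'(\declared) \;=\; \mech(\repr(\declared)) \;=\; \mech(\repr(\declared')) \;=\; \mech'(\declared'),
\]
which is precisely the condition for $\mech'$ to be SI.

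I do not expect any real obstacle: the proof is essentially bookkeeping around the construction of \repr. The only mildly delicate point is the degenerate case in which all flattened weights vanish, for which the normalization is undefined; here consistency is immediate since proportionality forces $G$ and $G'$ to already coincide, so any convention (e.g.\ leaving the graph unchanged) makes $\repr$ agree on both. Everything else is a direct unwinding of definitions.
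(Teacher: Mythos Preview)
Your proof is correct and follows essentially the same approach as the paper: both rely on the fact that $\repr(\instance)=\repr(\instance')$ for proportional instances and then observe that $\mech'(\declared)=\mech(\repr(\declared))=\mech(\repr(\declared'))=\mech'(\declared')$. You simply spell out the verification of that fact (and handle the all-zero degenerate case), whereas the paper invokes it directly from the paragraph preceding the lemma.
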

\begin{proof}
    Given any pair instances such that $\instance \sim_{P}\instance'$, $\repr(\instance)=\repr(\instance')$. Therefore, for instances having proportional flattened graphs, $\mech'$ outputs the same outcome. Hence, the mechanism is SI.
\end{proof}

\begin{lemma}\label{lemma:compositionReprPreservesApprox}
    If $\mech$ is a $\rho$-appoximating mechanism, then, $\mech'=\mech\circ\repr$ is a $\rho$-appoximating mechanism as well.
\end{lemma}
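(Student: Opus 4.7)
The plan is to exploit the fact that, for both ASHGs and FHGs, the social welfare of any partition $\pi$ depends only on the flattened graph of the instance (and, in the FHG case, on the sizes of the coalitions in $\pi$). Concretely, for a partition $\pi$, one has
\[
\SW_\instance(\pi) \;=\; \sum_{\{i,j\}:\, \pi(i)=\pi(j)} \hat{w}_\instance(i,j) \qquad \text{(ASHG)},
\]
and analogously for FHG with a $1/|\pi(i)|$ normalization per agent. I would first record this observation formally, so that whenever two instances $\instance \sim_P \instance'$ satisfy $\hat{w}_\instance = \lambda \cdot \hat{w}_{\instance'}$, we get $\SW_\instance(\pi) = \lambda\cdot \SW_{\instance'}(\pi)$ for \emph{every} partition $\pi$ of $\agents$.

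Next I would apply this to $\instance$ and $\repr(\instance)$. Since $\repr$ rescales the flattened weights by $1/M$ with $M=\max_{i\neq j}|\hat{w}_\instance(i,j)|$ and then packs the resulting symmetric weights into an upper-triangular directed instance whose flattened graph coincides with the rescaled $G'$, we have $\repr(\instance)\sim_P \instance$ with scaling factor $\lambda = 1/M$. Therefore, for every partition $\pi$,
\[
\SW_\instance(\pi) \;=\; M\cdot \SW_{\repr(\instance)}(\pi).
\]
Taking maxima over $\pi$ yields $\opt_\instance = M\cdot \opt_{\repr(\instance)}$, which is also consistent with Observation~\ref{obs:propSameOPT} (proportional instances share optimal partitions).

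Finally, I would conclude by a direct ratio computation. Let $\pi=\mech'(\instance)=\mech(\repr(\instance))$. Since $\mech$ is $\rho$-approximating on the input $\repr(\instance)$,
\[
\frac{\opt_{\repr(\instance)}}{\SW_{\repr(\instance)}(\pi)} \;\le\; \rho.
\]
Multiplying numerator and denominator by $M$ and using the two scaling identities above gives
\[
\frac{\opt_\instance}{\SW_\instance(\pi)} \;=\; \frac{M\cdot \opt_{\repr(\instance)}}{M\cdot \SW_{\repr(\instance)}(\pi)} \;=\; \frac{\opt_{\repr(\instance)}}{\SW_{\repr(\instance)}(\pi)} \;\le\; \rho,
\]
so $\mech'$ is $\rho$-approximating as claimed. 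The only mildly delicate point, and the one I would take care to verify explicitly, is that the flattened graph of the directed instance produced by $\repr$ really coincides with the rescaled $G'$ despite $\repr$ zeroing out the ``lower'' half of the weights: this works precisely because social welfare sees only the sums $w_{ij}+w_{ji}$, so placing all the mass on the $i<j$ side leaves $\SW$ unchanged and hence the proportionality $\repr(\instance)\sim_P\instance$ holds.
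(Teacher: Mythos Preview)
Your argument is correct and follows essentially the same route as the paper: both exploit that for proportional instances the social welfare of every partition scales by the same factor $\lambda$, so the ratio $\opt/\SW(\pi)$ is invariant under $\sim_P$, and then apply this with $\instance\sim_P\repr(\instance)$ to transfer the $\rho$-approximation of $\mech$ on $\repr(\instance)$ back to $\instance$. Your presentation is slightly more explicit about the scaling constant $M$ and the reason the upper-triangular encoding in $\repr$ preserves the flattened graph, but the underlying idea is identical.
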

\begin{proof}
    Let $\declared, \declared'\in\declarations$ and $\opt, \opt'$ be the values of a welfare-maximizing allocation in $\instance=(\agents, \declared)$ and $\instance'=(\agents, \declared')$, respectively. 
    
   If $\instance'\sim_{P}\instance$, meaning that their flattened graphs are proportional, there exists $\lambda > 0$ s.t.\ $d_i(j) + d_j(i) = \lambda\cdot (d'_i(j) + d'_j(i))$, for all $i,j\in \agents$ with $i\neq j$. Therefore, for both ASHGs and FHGs, $\opt =\lambda\cdot \opt'$ and, for any partition $\pi$, $\SW^\instance(\pi)=\lambda\cdot \SW^{\instance'}(\pi)$. Hence, for any $\pi$, $\frac{\opt}{\SW^\instance(\pi)}=\frac{\opt'}{\SW^{\instance'}(\pi)}$.

    Now, given any instance $\instance$, we denote by $\opt^\instance$ the value of any optimal outcome. Being $\instance\sim_{P} \repr(\instance)$, we can deduce $\frac{\opt^\instance}{\SW^\instance(\pi)}=\frac{\opt^{\repr(\instance)}}{\SW^{\repr(\instance)}(\pi)}$, and hence
    \begin{align*}
       r^{\mech'} &=\sup_{\instance}\frac{\opt^\instance}{\SW^\instance(\mech'(\instance))}
        =\sup_{\instance}\frac{\opt^\instance}{\SW^\instance(\mech(\repr(\instance)))} \\
        &= \sup_{\instance}\frac{\opt^{\repr(\instance)}}{\SW^{\repr(\instance)} (\mech(\repr(\instance)))} 
        \leq r^\mech = \rho 
    \end{align*}
    where the last inequality holds because $\mech$ is a $\rho$-approximating mechanism.
\end{proof}

\Cref{lemma:compositionReprPreservesApprox,lemma:compositionReprIsSI} together with \Cref{thm:whenSIisNOM} show the following theorem.

\begin{theorem}\label{thm:SIcharacterization}
    For any $\rho\geq 1$, in ASHGs or FHGs with arbitrary, non-negative, or bounded weights, there exists a NOM and $\rho$-approximating mechanism if and only if 
    there exists an SI and $\rho$-approximating mechanism.
\end{theorem}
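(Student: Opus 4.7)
The plan is to unpack the biconditional and verify each direction using the three results of Section~\ref{sec:characterization} that have already been established, namely \Cref{thm:whenSIisNOM}, \Cref{lemma:compositionReprIsSI}, and \Cref{lemma:compositionReprPreservesApprox}. Both directions turn out to be essentially immediate given this machinery, so the ``proof'' is really just an assembly argument.

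For the reverse direction ($\Leftarrow$), I would start by fixing an SI and $\rho$-approximating mechanism $\mech$, whose existence is granted by hypothesis. Since we are in one of the weight regimes (arbitrary, non-negative, or bounded) covered by \Cref{thm:whenSIisNOM}, scale-independence already implies non-obvious manipulability for $\mech$. Hence $\mech$ itself witnesses the existence of a NOM and $\rho$-approximating mechanism, and nothing else needs to be said.

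For the forward direction ($\Rightarrow$), I would fix a NOM and $\rho$-approximating mechanism $\mech$ and define $\mech' = \mech \circ \repr$. By \Cref{lemma:compositionReprIsSI}, the composition with \repr\ makes $\mech'$ scale-independent, because proportional instances are collapsed to the same representative before being fed to $\mech$, so they receive identical outputs. By \Cref{lemma:compositionReprPreservesApprox}, the approximation guarantee is preserved: since the social welfare of any partition scales by the same $\lambda>0$ between $\instance$ and $\repr(\instance)$, the ratio $\opt^\instance/\SW^\instance(\mech'(\instance))$ equals $\opt^{\repr(\instance)}/\SW^{\repr(\instance)}(\mech(\repr(\instance)))$, which is at most $\rho$. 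Thus $\mech'$ is SI and $\rho$-approximating, completing this direction. (Notice that the NOM hypothesis is not even used here; \emph{any} $\rho$-approximating mechanism can be turned into an SI one via the same trick.)

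There is no genuine obstacle: the main theorem is a packaging of the earlier three results, with the only care being to keep straight which direction uses which fact (SI$\Rightarrow$NOM for the reverse direction; the \repr-composition lemmas for the forward direction). The only thing worth checking explicitly in writing is that the hypothesis ``arbitrary, non-negative, or bounded weights'' is compatible with both the applicability of \Cref{thm:whenSIisNOM} and the well-definedness of \repr\ on these domains, which follows directly from how \repr\ normalizes by $\max_{i\neq j}|\hat{w}(i,j)|$ in each regime.
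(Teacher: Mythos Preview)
Your proposal is correct and matches the paper's own proof essentially line for line: the reverse direction invokes \Cref{thm:whenSIisNOM} to conclude that the given SI mechanism is already NOM, and the forward direction composes with \repr\ and appeals to \Cref{lemma:compositionReprIsSI} and \Cref{lemma:compositionReprPreservesApprox}. Your parenthetical observation that the NOM hypothesis is unused in the forward direction is also made explicitly by the paper right after the theorem.
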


\begin{proof}
($\Leftarrow$) Let $\mech$ be an SI and $\rho$-approximating mechanism. By \Cref{thm:whenSIisNOM} $\mech$ is NOM and the thesis follows as $\mech$ is $\rho$-approximating.

\noindent($\Rightarrow$) Let $\mech$ be a NOM and $\rho$-approximating mechanism. From the previous lemmas, $\mech'=\mech\circ\repr$ is the desired SI and $\rho$-approximating mechanism.
\end{proof}

Essentially, Theorem~\ref{thm:SIcharacterization} allows us to focus our attention on generic $\rho$-approximating mechanisms. Indeed, each such algorithm can be turned into a $\rho$-approximating and SI mechanism, through the procedure \textsc{Repr}, and hence in a $\rho$-approximating and NOM mechanism, by Theorem~\ref{thm:whenSIisNOM}.

\subsection{Approximate NOM mechanism}
We recall that, in ASHGs, whenever the agents' weights are non-negative, the optimum,  which consists in grouping all the agents in the same coalition, is SP. In turn, no SP mechanism has a bounded approximation ratio for arbitrary and bounded weights. Moreover, regardless of the SP or NOM requirement, no algorithm can provide an approximation in ASHG that is $O(n^{1-\epsilon})$, for every $\epsilon>0$, even in simple cases \cite{flammini2022strategyproof}, while computing welfare maximizing partitions in FHGs is NP-hard~\cite{aziz2015welfare}. In what follows, we focus on determining NOM and efficient mechanisms for both arbitrary, non-negative, and bounded weights. 

\begin{mechanism}\label{mech:matchingASHGs}
Given an instance $\instance$, let $\mech$ be a mechanism that builds the flattened graph corresponding to $\instance$, computes a maximum weighted matching on it, and creates a coalition for each matched pair of agents; unmatched agents are put in singletons. Then, we define $\ref{mech:matchingASHGs}=\mech\circ\repr$.
\end{mechanism}

\begin{theorem}\label{thm:matchingASHGsisNOM}
   \ref{mech:matchingASHGs} is NOM for both ASHGs and FHGs under unbounded, non-negative, and bounded weights. Furthermore, it provides an $n$-approximation of $\opt$ for ASHGs and a $2$-approximation for FHGs.
\end{theorem}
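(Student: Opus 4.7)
The plan is to split the claim into two parts and feed each into the machinery already built. For the NOM part, \ref{mech:matchingASHGs} is by construction of the form $\mech \circ \repr$, so Lemma~\ref{lemma:compositionReprIsSI} makes it SI, and Theorem~\ref{thm:whenSIisNOM} immediately yields NOM for arbitrary, non-negative, and bounded weights. For the approximation, Lemma~\ref{lemma:compositionReprPreservesApprox} tells us that composition with $\repr$ preserves the ratio, so it suffices to analyze the underlying matching-based mechanism $\mech$ on a generic instance with flattened graph $G = (\agents, \hat{w})$.

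Let $M^*$ denote the weight of a maximum-weight matching on $G$; it is non-negative, since the empty matching is always feasible. A direct computation on matched pairs gives $\SW(\mech) = M^*$ in the ASHG case (each matched pair $\{i,j\}$ contributes $\hat{w}(i,j)$) and $\SW(\mech) = M^*/2$ in the FHG case (since $u_i(\{i,j\}) + u_j(\{i,j\}) = \hat{w}(i,j)/2$); unmatched agents contribute $0$. To upper-bound $\opt$, I would fix any optimum partition $\OPT = \{C_1,\dots,C_m\}$ and, for each $C_h$ with $k_h = |C_h| \geq 2$, invoke the classical edge coloring of the complete graph $K_{k_h}$ to decompose its edges into at most $k_h$ matchings. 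Each such matching is supported on $C_h$, hence its weight is at most $M^*_h$, the weight of a maximum matching confined to $C_h$ (again $\geq 0$). Therefore $\sum_{\{i,j\}\subseteq C_h}\hat{w}(i,j) \leq k_h\cdot M^*_h$, which gives $\SW(C_h) \leq k_h\cdot M^*_h$ for ASHG and $\SW(C_h) \leq M^*_h$ for FHG. Since the coalitions are vertex-disjoint, $\bigcup_h M^*_h$ is a valid matching in $G$ and hence $\sum_h M^*_h \leq M^*$; summing over $h$, the two ratios fall out as $\SW(\OPT) \leq n\,M^* = n\,\SW(\mech)$ and $\SW(\OPT) \leq M^* = 2\,\SW(\mech)$, respectively.

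The main technical obstacle is the matching-decomposition step, and in particular the decision to bound each matching's weight by $M^*_h$ rather than by the cruder global $M^*$: this is what lets me exploit the vertex-disjointness of the coalitions via $\sum_h M^*_h \leq M^*$, and without it a naive argument would lose an extra multiplicative factor equal to the number of coalitions, rather than collapsing neatly to $n$ for ASHG and $2$ for FHG. The degenerate cases ($M^* = 0$, or $\opt = 0$) are routine: non-positivity of all edge weights forces $\opt = 0$ as well, so the ratio is $1$ by the convention used in the definition of BAPX.
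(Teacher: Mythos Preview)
Your proposal is correct and follows essentially the same route as the paper: the NOM part is handled via Lemma~\ref{lemma:compositionReprIsSI} plus Theorem~\ref{thm:whenSIisNOM}, and the approximation part decomposes each optimal coalition into at most $|C_h|$ matchings (the paper states this as Fact~\ref{fact:cliqueMatchingPartition}), bounds each by the local maximum matching value, and then uses vertex-disjointness to collapse $\sum_h M^*_h \leq M^*$. Your explicit remark about why the local bound $M^*_h$ (rather than the global $M^*$) is needed, and your handling of the degenerate $\opt=0$ case, are small expository additions but do not change the argument.
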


To show the theorem, we will make use of the following well-known fact (see e.g.\ Problem 16.5 of \cite{soifer2009mathematical}):

\begin{fact}\label{fact:cliqueMatchingPartition}
    Let $K$ be a clique of size $k$.
    Its edges can be partitioned into $k-1$, if $k$ is even, and $k$, if odd, disjoint matchings.
\end{fact}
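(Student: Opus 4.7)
The plan is to establish this as the classical edge-chromatic decomposition of the complete graph, combining a degree-counting lower bound with an explicit round-robin construction. First I would note the lower bound: any matching uses each vertex at most once, hence contains at most $\lfloor k/2 \rfloor$ edges, while $K$ has $\binom{k}{2}$ edges; thus the number of matchings needed is at least $\binom{k}{2}/\lfloor k/2 \rfloor$, i.e.\ at least $k-1$ for even $k$ and at least $k$ for odd $k$.

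For the upper bound in the even case $k = 2m$, I would exhibit $k-1$ perfect matchings via the standard round-robin construction. Label the vertices as $\{0, 1, \dots, k-2\} \cup \{\infty\}$, and for each round $r \in \{0, 1, \dots, k-2\}$ set
\[
M_r = \bigl\{\{\infty, r\}\bigr\} \cup \bigl\{\{(r+i) \bmod (k-1),\, (r-i) \bmod (k-1)\} : i = 1, \dots, m-1\bigr\}.
\]
Two checks are required. First, within each $M_r$ the $m$ edges are vertex-disjoint: the $2m-1$ residues $r,\, r \pm 1,\, \dots,\, r \pm (m-1)$ modulo $k-1 = 2m-1$ are all distinct (otherwise $i+j \equiv 0 \pmod{2m-1}$ with $0 \le i,j \le m-1$ not both zero, which is impossible since $i+j \le 2m-2$). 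Second, each edge appears in exactly one round: this is immediate for edges touching $\infty$, and for an edge $\{a, b\}$ with $a, b \in \{0, \dots, k-2\}$ the relation $a + b \equiv 2r \pmod{k-1}$ uniquely determines $r$, because $2$ is invertible modulo the odd number $k-1$.

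For the odd case, I would reduce to the even one by adjoining an auxiliary vertex to form $K_{k+1}$ of even order. By the previous case, the edges of $K_{k+1}$ decompose into $k$ perfect matchings; restricting each matching to the original vertex set removes exactly one edge (the one incident to the auxiliary vertex), producing $k$ pairwise edge-disjoint near-perfect matchings whose union covers all edges of $K$. The subtlest step is the second verification in the even case, namely that the rotational pattern covers every non-$\infty$ edge exactly once; this hinges on $2$ being invertible modulo $k-1$, i.e.\ on the odd parity of $k-1$. This parity obstruction is also precisely the reason the odd case cannot be handled by the same direct rotational construction and instead requires the padding argument.
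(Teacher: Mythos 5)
Your proof is correct. The paper does not prove this fact at all—it cites it as well known (Problem 16.5 of the referenced book)—and your argument is the canonical one: the degree-counting lower bound, the round-robin $1$-factorization of $K_{2m}$ (whose correctness indeed rests on $2$ being invertible modulo the odd number $k-1$), and the reduction of the odd case to the even case by adjoining a vertex; all steps check out.
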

\begin{proof}[Proof of \Cref{thm:matchingASHGsisNOM}]
    (NOM) By \Cref{lemma:compositionReprIsSI}, \ref{mech:matchingASHGs} is SI; therefore, from \Cref{thm:whenSIisNOM}, \ref{mech:matchingASHGs} is NOM for instances with arbitrary, non-negative, or bounded weights. 
    
    (Approximation) Let $\mech$ be the subroutine as described in the definition of \ref{mech:matchingASHGs}. We next show the approximation guarantee of $\mech$  for both ASHGs and FHGs. Being $\ref{mech:matchingASHGs}=\mech\circ\repr$, by \Cref{lemma:compositionReprPreservesApprox}, this approximation guarantee is preserved in \ref{mech:matchingASHGs}. Let $\pi$ denote the outcome of $\mech$.
    
    Consider an optimal partition $\OPT$ and let $k$ be the number of its coalitions, say, $C_1, \dots, C_k$. Let us compute for each coalition $C_h$ a maximum matching $M_h$ whose edges are in the subgraph 
    by considering only the vertices in $C_h$, and let $v_h$ be the value of this matching (the sum of its edges weights). Let us now consider a coalition in the optimum $C_h$ of size $\modulus{C_h} =c_h$. By \Cref{fact:cliqueMatchingPartition}, the edges of $C_h$ can be partitioned into either $c_h$ or $c_h-1$ disjoint matchings. Each of these matchings has a value of at most $v_h$, due to the maximality of $M_h$.
    Therefore, $\sum_{i,j\in C_h} w_{ij} \leq c_h\cdot v_h \leq n\cdot v_h$.

    Consider now the maximum matching $M^*$ computed by $\mech$, and let us denote by $v^*$ its total value (the sum of the weights of its undirected edges).

    In the case of ASHGs, $\SW(C_h)=\sum_{i,j\in C_h} w_{ij} $ and being $\opt = \sum_{h=1}^k \SW(C_h) \leq \sum_{h=1}^k n\cdot v_h \leq n\cdot v^* = n\cdot \SW(\pi)$, the approximation follows. Notice that the last inequality holds true as $M = \cup_{h=1}^k M_h $ is a matching in $G$, and being $M^*$ optimal, the value of $M$ cannot be larger than $v^*$.
     
    In FHGs, $SW(C_h) = \frac{\sum_{i,j\in C_h} w_{ij}}{C_h} \leq v_h$, and hence 
    $\opt = \sum_{h=1}^k \SW(C_h) \leq \sum_{h=1}^k v_h \leq v^* = 2\cdot \SW(\pi)$. Notice that
    if the value of a matching is $v$, then, the $\SW$ of the corresponding partition equals $v/2$.
\end{proof}

\section{NOM and Optimality in Discrete Settings}
So far we have seen that when weights range in $\mathbb{R}$, $\mathbb{R}_{\geq0}$, or $[-1,1]$, optimality and NOM are compatible. However, in \cite{flammini2025non} it has been shown that for ASHGs with weights in $\set{-n, 1}$, no optimal mechanism is NOM, while, if the weights are in $\set{-\frac{1}{n},1}$ there exists a NOM and optimal mechanism. 
Therefore, even in simple discrete settings, the compatibility of NOM with optimality depends on the specific values the weights may take.

In what follows, we will dive into these known results to provide a more complete understanding of the compatibility of NOM with optimality in discrete settings. Since known results only focus on ASHGs, we here only consider this class, and we assume general duplex valuations, i.e., weights that may take values in $\set{-x,0, 1}$ for $x\in [1, \infty)$, that contains and extends the setting considered in \cite{flammini2025non}. 

We are able to prove the following characterization.
\begin{theorem}\label{thm:generalDuplex}
For general duplex valuations with $x\geq 1$, the following hold true:
\begin{itemize}
    \item For $x> 2n -3$, any optimal mechanism is NOM;
    \item for $x \in (1, 2n-3)$, no optimal mechanism is NOM;
    \item for $x\in \set{1, 2n-3}$ there exists an optimal and NOM mechanism.
\end{itemize}
\end{theorem}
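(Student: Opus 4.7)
The theorem breaks into three regimes; I plan to address each by a different technique, focusing throughout on potential manipulations by agent $1$. For the first regime, $x>2n-3$, the plan is to prove every optimal mechanism is NOM via a structural lemma: in any optimum, every coalition has only non-negative internal undirected edges. Indeed, if a coalition $C$ contained an agent $i$ incident to a strictly negative internal edge $\hat{w}(i,j)\leq -(x-1)$, moving $i$ to a singleton would change the $\SW$ by $-\sum_{k\in C\setminus\{i\}}\hat{w}(i,k)\geq -(2(|C|-2)-(x-1))=x+3-2|C|>0$, contradicting optimality (using $x>2n-3\geq 2|C|-3$). Given this, under truth one attains $\sup_{d_{-i}} u_i=|\{j:w_{ij}=1\}|$ (letting others declare $1$) and $\inf_{d_{-i}} u_i=0$ (letting others declare $-x$ to $i$, thereby isolating $i$); while under any lie $d_i$, the same lemma applied to the declared instance forces $i$ to be grouped only with $j$ satisfying $d_{ij}\in\{0,1\}$, so the true contribution from such $j$ is at most $1$, and $i$ can still be isolated by the others. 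Thus $\sup\leq |\{j:w_{ij}=1\}|$ and $\inf\leq 0$, and both NOM inequalities follow.

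For the second regime, $x\in(1,2n-3)$, the plan is to construct a universal obvious manipulation. Take $w_1$ with one enemy $w_{12}=-x$, $r$ friends ($w_{1j}=1$), and $n-2-r$ neutrals ($w_{1j}=0$), where $r$ is an integer in $(x-n+1,x)\cap[0,n-2]$; a short case analysis shows such $r$ exists for every $x\in(1,2n-3)$ and $n\geq 3$. Let $d_{-1}$ be the profile where every other agent declares $1$ to everyone. The cut of the partition $\{\{1\},\{2,\dots,n\}\}$ equals $n-1+r-x>0$, and any other non-grand partition has cut at least $2n-3-x>0$ (the additional term $2|S|(n-1-|S|)$ coming from cut edges among $\{2,\dots,n\}$ when $1$ is placed in a coalition $\{1\}\cup S$ with $1\leq|S|\leq n-2$ dominates any negative contribution). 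Hence the grand coalition is the unique optimum and gives true utility $u_1=r-x<0$. Declaring instead $d_1=(-x,\dots,-x)$ makes every undirected edge incident to $1$ have weight $\leq 1-x<0$, so the unique optimum always isolates agent $1$ and gives $u_1=0$. This violates condition~2 of NOM on $(w_1,d_1)$ for every optimal $\mech$.

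For the third regime, $x\in\{1,2n-3\}$, the plan is to design an optimal mechanism with a careful tie-breaking rule. For $x=2n-3$, a sharpening of the first-regime calculation shows that an optimum either has no internal negative edges, or is a grand coalition with a single edge of weight exactly $-(x-1)$; in the latter case, moving one endpoint of that edge to a singleton yields an alternative optimum with no internal negative edges, and letting $\mech$ output such a representative makes the first-regime analysis carry over. For $x=1$ the structural lemma fails; here the plan is to establish a different key lemma---that among all optima at least one is individually rational (i.e., $u_i(\pi(i))\geq 0$ for every $i$)---and have $\mech$ output such a representative. I expect this last lemma to be the main obstacle: in the symmetric setting $\{-1,0,1\}$ optima can trade off ``unhappy'' agents against happy neighbors, so proving the existence of an individually rational optimum demands a careful restructuring argument exploiting the narrow range $\hat{w}\in\{-2,-1,0,1,2\}$, together with a verification that the resulting deterministic tie-breaking does not open a profitable lie in either the best- or the worst-case comparison.
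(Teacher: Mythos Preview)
Your arguments for the first regime ($x>2n-3$) and for the boundary $x=2n-3$ are essentially the paper's: the structural lemma that optimal coalitions contain no negative internal edge, followed by the best/worst-case comparison, is exactly what the paper does (you phrase it via undirected weights $\hat w$, the paper via directed ones, but in this range the two are equivalent since any directed $-x$ forces $\hat w\le 1-x<0$). For the middle regime $x\in(1,2n-3)$ your construction is correct and in fact tidier than the paper's. The paper parametrizes the truthful type by the number $k$ of enemies (with all non-enemies friends), obtains open intervals $I_k=\bigl((n-1)/k-1,\,2(n-1)/k-1\bigr)$, observes that $I_1$ and $I_2$ fail to overlap at $x=n-2$, and then handles $x=n-2$ by a separate ad-hoc instance. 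Your parametrization---one enemy, $r$ friends, the rest neutral---covers the entire interval with a single argument and avoids that case split.

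The genuine gap is your plan for $x=1$. The lemma you propose, that every instance admits an individually rational optimum, is false. Take $n=3$, true type $w_{12}=-1$, $w_{13}=0$, and let agents $2,3$ declare $1$ to everyone. Then $\hat w(1,2)=0$, $\hat w(1,3)=1$, $\hat w(2,3)=2$, so the grand coalition (with $\SW=3$) is the \emph{unique} optimum, yet $u_1(\agents)=-1<0$; hence no IR optimum exists. Your mechanism would have to output the grand coalition on this input, and the rest of your argument collapses. The paper takes a completely different route for $x=1$: it does not try to control the worst-case utility at all. Instead it lets $\mech$ output an optimum that, among all optima, maximizes the size of the largest coalition, and shows that for every declaration $d_i$ and every target $C\ni i$ there is a profile $\declared_{-i}$ (agents in $C$ declare $1$ inside $C$ and $-1$ outside; agents outside $C$ declare $-1$ everywhere) for which $\mech$ places $i$ in exactly $C$. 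The key computation is that, with $x=1$, the cut $(C\setminus\{i\},\{i\})$ in this profile equals $2(|C|-1)-2k\geq 0$ regardless of $d_i$, so putting $i$ into $C$ is never strictly worse and the tie-breaking forces it. This gives $Coal_i^{\mech}(d_i)=\agents_i$ for every $d_i$, and NOM follows from Proposition~\ref{obs:sameSetIsNOM}. You should replace your IR-based plan with this coalition-reachability argument.
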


Note that for $x\in (0, 1)$ the compatibility of NOM with optimality remains open, besides the case $x=\frac{1}{n}$ \cite{flammini2025non}.
We believe this case is worth further investigation.

We next prove \Cref{thm:generalDuplex} case by case.
\paragraph{Case $x> 2n -3$: Any optimal mechanism is NOM.}
We start by showing the following lemma.

\begin{lemma}\label{lemma:duplexNoNegativeRelation}
Under general duplex valuations with $x> 2n -3$, for any optimal partition $\pi$ there is no negative relation within the agents in any coalition $C\in \pi$.   
\end{lemma}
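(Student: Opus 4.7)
The plan is to argue by contradiction via a simple local improvement move: remove one endpoint of a putative negative edge from its coalition and form a singleton. I will show that when $x > 2n-3$, this move strictly increases the social welfare, contradicting the assumed optimality of $\pi$.

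Concretely, suppose towards contradiction that $\pi$ is optimal and there exist $i, j \in C$ for some $C \in \pi$ with $w_{ij} = -x$ (the case $w_{ji} = -x$ is symmetric). Consider the partition $\pi'$ obtained from $\pi$ by moving $i$ out of $C$ into a singleton coalition. All coalitions other than $C$ are unchanged, so
\[
\SW(\pi) - \SW(\pi') = \sum_{k \in C \setminus \{i\}} \left(w_{ik} + w_{ki}\right).
\]
I will upper bound this sum. The pair with $j$ contributes $w_{ij} + w_{ji} \leq -x + 1$, since $w_{ij} = -x$ and $w_{ji} \in \set{-x, 0, 1}$. Each of the remaining $\modulus{C} - 2$ pairs contributes at most $1 + 1 = 2$, as weights live in $\set{-x, 0, 1}$. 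Hence
\[
\SW(\pi) - \SW(\pi') \leq 2(\modulus{C} - 2) + (1 - x) = 2\modulus{C} - 3 - x.
\]
Since $\modulus{C} \leq n$, this quantity is at most $2n - 3 - x$, which is strictly negative by the assumption $x > 2n-3$. Therefore $\SW(\pi') > \SW(\pi)$, contradicting the optimality of $\pi$.

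The argument is essentially a single counting step, so there is no deep obstacle; the only subtlety to handle cleanly is the convention on ``negative relation'', i.e.\ ensuring that we may assume without loss of generality that the negative edge is directed from $i$ to $j$ (so that one of the two endpoints can be chosen to be removed), and that the bound $w_{ij} + w_{ji} \leq 1 - x$ on the flattened contribution of the bad pair is tight under general duplex valuations. Everything else is a direct comparison of $2\modulus{C} - 3$ to $x$, using $\modulus{C} \leq n$ to absorb the coalition size into the global parameter $2n - 3$.
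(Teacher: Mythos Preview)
Your proof is correct and follows essentially the same approach as the paper: both argue by contradiction, remove an endpoint of the negative edge to a singleton, and bound the cut value by $2\modulus{C}-3-x \leq 2n-3-x < 0$. The only cosmetic difference is that you group the $2(\modulus{C}-1)$ directed edges into undirected pairs (bounding each by $2$ except the bad pair by $1-x$), whereas the paper bounds the directed edges individually; the resulting estimate is identical.
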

\begin{proof}
If $C=\set{i}$ for some $i$ it does not contain any relationship in it.

For the sake of a contradiction, let $C\in \pi$ be a coalition of size $c>1$ and let $j\in C$ be an agent incident to a negative relationship in $C$. Then, $\cut(C\setminus \set{j}, \set{j}) <0$. In fact, we know in the cut there is at least one negative relationship, of value at most $-2n +3$, and the total relationships within the cut are $2(c-1)$. Therefore, there are at most $2c -3$ positive relationship, whose have value exactly $1$, implying  
\begin{align}
    \cut(C\setminus \set{j}, \set{j}) \leq 2c -3 -x < 2c - 2n  \leq 0 \label{eq:negativeCutForDuplex}
\end{align}
as $c\leq n$ and $x> 2n -3$. 
Being the value of the cut strictly negative, moving $j$ to a singleton coalition would be profitable for the $\SW$, contradicting the optimality of $\pi$.
\end{proof}

We are ready to show that any optimal mechanism is NOM.

Worst-case scenario. If $i$ truthfully reports $w_i$, by \Cref{lemma:duplexNoNegativeRelation}, no optimal outcome can guarantee her a negative utility as no agent in $\delta^-(w_i)$ can be put in the same coalition with her. Consider now any $d_i$, we show there exists $\declared_{-i}$ such that $i$ ends up in a singleton coalition. Indeed, by \Cref{lemma:duplexNoNegativeRelation}, it suffices to set $d_j(i)= -x$, for each $j\neq i$. In conclusion, the worst-case scenario by truthfully reporting is to be assigned to the singleton coalition, however, this cannot be circumvented by misreporting the preferences. Therefore, condition~1 of \Cref{def:NOM} is satisfied.

Let us now turn our attention to the best-case scenario. In this case, by truthfully reporting $w_i$, agent $i$ will be assigned to the coalition $\Delta^+_i$ whenever all agents in $\delta^+(w_i)$ positively value only agents in $\Delta^+_i$ and any other declaration is set to $-x$. Since no coalition guarantees $i$ a higher utility, condition~2 of \Cref{def:NOM} is satisfied.

\paragraph{Case $x= 2n -3$: There exists an optimal NOM mechanism.}
Let us notice that if $x= 2n -3$, \Cref{lemma:duplexNoNegativeRelation} no longer holds true. In fact, \Cref{eq:negativeCutForDuplex} is satisfied with the equality whenever $j$ is incident to only one negative relation and the coalition $C$ is the grand coalition.  In this case, both the outcomes $\pi=\set{\agents}$ and  $\pi'=\set{\agents\setminus\set{j}, \set{j}}$ are optimal. Notice that this is the only place in the previous paragraph where assuming $x< 2n-3$ rather than $x\leq 2n-3$ makes the difference.
However, if we restrict the attention to optimal mechanism breaking ties in favor of $\pi'$ rather than $\pi$, then, \cref{lemma:duplexNoNegativeRelation} would hold true for such mechanisms.

Specifically, consider the following mechanism.
\begin{mechanism}\label{mech:duplex2n-3}

    Given a general duplex valuation instance, return the optimal partition $\pi$. Break ties arbitrarily, except in the case of a tie between $\pi= \set{\agents}$ and $\pi'=\set{\agents\setminus\set{j}, \set{j}}$, for some $j$, broken in favor of $\pi'$.
\end{mechanism}

This change suffices to derive the following lemma.

\begin{lemma}\label{lemma:duplexNoNegativeRelation2n-3}
Under general duplex valuations with $x> 2n -3$, let $\pi$ be the outcome of \ref{mech:duplex2n-3}, then there is no negative relation within the agents in any coalition $C\in \pi$.   
\end{lemma}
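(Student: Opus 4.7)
The plan is to derive this lemma as an immediate consequence of \Cref{lemma:duplexNoNegativeRelation}, by observing that \ref{mech:duplex2n-3} always returns a partition that maximizes the social welfare. Indeed, the tie-breaking clause in the definition of \ref{mech:duplex2n-3} only chooses between partitions that already attain the optimum, so the returned $\pi$ is, by construction, an optimal partition of the input instance. Since \Cref{lemma:duplexNoNegativeRelation} guarantees that for $x > 2n-3$ \emph{every} optimal partition has no negative internal relationship in any of its coalitions, the same is true for $\pi$, and the conclusion follows.

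As a secondary check, I would verify that in the regime $x > 2n - 3$ the special tie-breaking rule is in fact never triggered, so that \ref{mech:duplex2n-3} coincides with an arbitrary optimal mechanism. Revisiting inequality~\eqref{eq:negativeCutForDuplex} in the proof of \Cref{lemma:duplexNoNegativeRelation}, the estimate $\cut(C \setminus \set{j}, \set{j}) \leq 2c - 3 - x$ is strict for every coalition size $c \leq n$ under $x > 2n - 3$. In particular, if the grand coalition $\set{\agents}$ contains any agent $j$ incident to a negative edge, then $\set{\agents \setminus \set{j}, \set{j}}$ strictly improves upon $\set{\agents}$ in terms of $\SW$, so the two cannot be simultaneously optimal. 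Hence the special tie between $\pi = \set{\agents}$ and $\pi' = \set{\agents \setminus \set{j}, \set{j}}$ singled out by the mechanism never arises in this range of $x$.

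The main obstacle, which is very mild, is merely to articulate the logical bridge between the two lemmas: \Cref{lemma:duplexNoNegativeRelation} is a universal statement about optimal partitions, whereas the lemma under consideration concerns the specific output of \ref{mech:duplex2n-3}, and one only needs the trivial observation that this output is optimal. Had the intended statement been for the boundary case $x = 2n - 3$, where inequality~\eqref{eq:negativeCutForDuplex} can be tight and the tie between $\pi$ and $\pi'$ can genuinely materialize, a more delicate analysis would be needed to rule out negative internal edges, and this is precisely where the specific tie-breaking in favor of $\pi'$ would do real work; but for the statement as literally worded with $x > 2n - 3$, no such refinement is required.
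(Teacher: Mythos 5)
Your proof of the statement \emph{as literally worded} is fine, but the hypothesis ``$x>2n-3$'' in \Cref{lemma:duplexNoNegativeRelation2n-3} is a typo for ``$x=2n-3$'': the lemma sits under the paragraph heading for the case $x=2n-3$, the mechanism's tie-breaking rule is introduced precisely because \Cref{lemma:duplexNoNegativeRelation} fails at that boundary value, and the sentence immediately after the lemma uses it to conclude that \ref{mech:duplex2n-3} is NOM \emph{for $x=2n-3$}. For $x>2n-3$ the lemma is, as you say, an immediate consequence of \Cref{lemma:duplexNoNegativeRelation} and carries no content beyond it. You have correctly diagnosed all of this and even located where the real work lies, but you then explicitly decline to do that work, so the substantive claim the paper needs is left unproven. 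The missing argument is short: for $x=2n-3$, revisiting \Cref{eq:negativeCutForDuplex}, a coalition $C$ of size $c$ in an optimal partition containing an agent $j$ incident to $m\geq 1$ negative internal relations has $\cut(C\setminus\set{j},\set{j})\leq (2c-2-m)-mx$, which is strictly negative (contradicting optimality) unless $c=n$ and $m=1$ with all remaining $2n-3$ relations equal to $1$; in that single surviving configuration the cut is exactly $0$, so $\set{\agents}$ and $\set{\agents\setminus\set{j},\set{j}}$ have equal social welfare, and \ref{mech:duplex2n-3} by construction returns the latter. Hence the output of \ref{mech:duplex2n-3} never contains a negative internal relation. (The paper itself gives no displayed proof, relying on exactly this preceding discussion.)

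A minor further point: your ``secondary check'' that the special tie never arises when $x>2n-3$ is not quite right. The tie between $\set{\agents}$ and $\set{\agents\setminus\set{j},\set{j}}$ can occur for any $x$, e.g., when all mutual weights incident to $j$ are $0$ and the rest of the graph is positive. What is true is that the tie cannot occur when $j$ is incident to a \emph{negative} internal relation, which is all your main argument needs; but as stated the check is false.
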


By replacing \Cref{lemma:duplexNoNegativeRelation} with \Cref{lemma:duplexNoNegativeRelation2n-3} in the proof for $x<2n-3$ we can show that \ref{mech:duplex2n-3} is NOM for $x=2n-3$.

\paragraph{Case $x \in (1, 2n-3)$: No optimal mechanism is NOM.}
Let $d^*_i$ be the declaration where $d^*_i(j)= -x$, for all $j\in\agents \setminus \set{i}$. Being $x>1$, then, for any optimal mechanism $\mech$ and any $\pi \in Out_i^\mech(d^*_i)$ it holds that $\pi(i)=\set{i}$. If not, there exits $\pi \in Out_i^\mech(d^*_i)$, that is optimal for some $(d^*_i, \declared_{-i})$, such that $\modulus{\pi(i)}>1$.
Denoted by $C= \pi(i)$ and by $c>1$ the size of $C$, since each agent in $C\setminus\set{i}$ values $i$ at most $1$, the total value of the cut $(C\setminus\set{i}, \set{i})$ in such instance is then at most $c-1 - (c-1)x =  (c-1)(1-x) $, and therefore negative as $x>1$. Being the value of the cut negative, moving $i$ to a singleton strictly increases the $\SW$, contradicting the optimality of $\pi$ for the instance $(d^*_i, \declared_{-i})$.

Consider now $\declared_{-i}$ be the declarations of the other agents where $d_j(j')=1$, for each $j\neq i$ and $j'\in \agents$. 
The outcome of an optimal mechanism for the instance $(d_i, \declared_{-i})$ is either $\pi= \set{\agents}$ or $\pi'=\set{\agents\setminus\set{i}, \set{i}}$ (or both of them). 
In fact, if there exists an optimal outcome $\pi''\neq \pi, \pi'$. Let $C= \pi''(i)\neq \set{i}$, then, $\pi''=\set{C, \agents\setminus C}$ as the agents in $\agents\setminus C$ positively evaluate each other. Let $c =\modulus{C}$ and $\ell = \modulus{\agents \setminus C}$ we have, for $n>3$,
\begin{align*}
     \SW(\pi'') &=  \SW(C) + \SW(\agents\setminus C) \\ & \leq  c(c-1) + \ell(\ell-1) = c^2+ \ell^2 - n 
     \\         &\leq 4 + (n-2)^2 - n = n^2-5n +8  \\ &< n^2 -3n +2= (n-1)(n-2)= \SW(\pi') \ .
 \end{align*}

Let us consider now consider any possible truthful declaration of $i$, $w_i\neq d^*_i$, and let $k=\modulus{\delta^-(w_i)}$ with $k<n-1$. Let $\declared_{-i}$ be the declarations of the other where $d_j(j')=1$, for each $j\neq i$ and $j'\in \agents$. 
Now, depending on $k$ and $x$, the optimum may be either $\pi= \set{\agents}$ or $\pi'=\set{\agents\setminus\set{i}, \set{i}}$. If
\begin{align}\label{eq:conditionPoitiveCut}
\cut(\agents\setminus\set{i}, \set{i}) > 2(n-1) - (1+x)k >0
\end{align}
holds true, the only optimum puts the agents in the grand coalition. 
In turn, if
\begin{align}\label{eq:conditionNegativeUtility}
u_i(\agents) < n-1 - (1+x)k <0
 \end{align}
the grand coalition guarantees $i$ a negative utility.
Notice that this may not be the worst outcome in $Out^\mech(w_i)$, however, we can deduce that in the worst outcome, the utility of $i$ must be negative. 
Therefore, declaring $d^*_i$ rather than $w_i$ would improve the worst-case scenario as $i$ will always be put in a singleton coalition (whenever $x>1$).

We next show that for $x \in (1, n-2) \cup (n-2, 2n-3)$ \Cref{eq:conditionPoitiveCut,eq:conditionNegativeUtility} can be simultatenously satisfied complementing the proof for these values of $x$.

For a fixed an integer $k$,
we derive $x< 2\frac{n-1}{k}-1$ from \Cref{eq:conditionPoitiveCut} and $x> \frac{n-1}{k}-1$ from \Cref{eq:conditionNegativeUtility}.
Therefore, they simultaneously hold true for $x \in (\frac{n-1}{k}-1, 2\frac{n-1}{k}-1)$; we denote such interval by $I_k$ as it depends on the value of $k$. Note that $I_1=(n-2, 2n-3)$ and $I_2=(\frac{n-1}{2}-1, n-2)$. The extreme points of the intervals $I_k$ are decreasing for increasing $k$. Furthermore, $2\frac{n-1}{k+1}-1 > \frac{n-1}{k}-1$ if and only if $k>1$, showing that the interval $I_{k+1}$, for $k\geq2$, overlaps $I_k$. Since $k$ can be chosen to be any value in $\{1, \ldots, n-2\}$, we can conclude that for any $x \in (1, n-2) \cup (n-2, 2n-3)$, 
\Cref{eq:conditionPoitiveCut,eq:conditionNegativeUtility}
can be simultaneously satisfied, showing the existence of a manipulation increasing the worst-case scenario for such values of $x$.  

It remains to discuss the case $x=n-2$.

Let $C\subset \agents_i$, for some $i\in\agents$, be a coalition of size $\modulus{C}= \frac{3}{4}n$. Let $j^*\in C$ and let $w_i$ be such that $w_i(j)= 1 $ for all $j\in C\setminus\set{i,j^*}$ and $w_i(j)= -x $ for all $j\in\agents\setminus C$. Moreover, $w_{i}(j^)=-x$. Consider now the instance $(w_i, \declared_{-i})$, where $d_j(j')=1$ if and only if $j\in C\setminus\set{i}$ and $j'\in C$ and $d_j(j')=-x$, otherwise. 
In such an instance, the unique optimum consists of the coalition $C$, and the agents in $\agents\setminus C$ are put in singletons. 

Being the aforementioned the unique optimum, any optimal mechanism will put $i$ in the coalition $C$, and $u_i(C)<0$ when $x=n-2$. 
So, by truthfully reporting it might be possible for agent $i$ to achieve a negative utility.

Consider now the following manipulation $d^*_i$ where $i$ evaluates $-x$ all the agents. Regardless of $\declared_{-i}$, in any optimal outcome, $i$ is put in a singleton coalition which guarantees her a utility equal to $0$. Hence, $d^*_i$ violates condition~2 of \Cref{def:NOM} with respect to the $w_i$ described above.

\paragraph{Case $x=1$: There exists an optimal NOM mechanism.}

\begin{mechanism}\label{mech:x=-1}
   It outputs an optimal partition maximizing the size of the largest coalition.
\end{mechanism}

We next show that for any declaration $d_i$ and any $C\subseteq\agents$ with $i\in C$ there exists a declaration $\declared_{-i}$ such that \ref{mech:x=-1} puts $i$ in the coalition $C$ (even if $C =\set{i}$). This will imply that $Coal_i^{\ref{mech:x=-1}}(d_i)= \agents_i$, for any declaration $d_i$, and hence \ref{mech:x=-1} is NOM by \Cref{obs:propSameOPT}, \Cref{item:Coal}.

To show the previous claim, for a fixed $d_i$, we set $\declared_{-i}$ as follows: for each $j\neq i$, $d_j(j')= 1$ iff $j,j'\in C$ and $d_j(j')= -1$, otherwise. In such instance, if $C =\set{i}$, then, $i$ will be put in a singleton in any optimal outcome and the claim follows.
Hereafter, we can assume $C \neq \set{i}$. 

Clearly, in any optimal partition $\pi$, if $j\not\in C$ then $j$ is in a singleton. In turn, for any $j,j' \in C\setminus \set{i}$, $j\in \pi(j')$. Therefore, there are only two possibilities for $i$, either, $\pi(i)=\set{i}$ or $\pi(i)=C$. Consider the cut $(C\setminus\set{i}, \set{i})$, and denote by $k$ the number of agents in $C$ valued $-1$ by $i$. The $Cut(C\setminus\set{i}, \set{i})$, in the given instance, is at most $2(\modulus{C}-1) - (1+x)k= 2(\modulus{C}-1) - 2k$. Since $k\leq \modulus{C}-1$, then, the cut is always non-negative, meaning that it is never strictly convenient to put $i$ in a singleton coalition. Furthermore, even if the cut is exactly $0$, the mechanism will put $i$ in the coalition $C$ as it breaks ties in favor of partitions having the largest coalition of maximum size. 

In conclusion, regardless of the declaration $d_i$, the coalitions $i$ may end up with remain the same, showing that \ref{mech:x=-1} is NOM.

With this last case we conclude the proof of \Cref{thm:generalDuplex}.

Although we just proved that a NOM and optimal mechanism exists, it is natural to wonder if all optimal mechanisms are NOM. Unfortunately, this is not the case.
\begin{example}
    Let  $x= 2n -3$.
    Assume that in case of ties between the outcomes $\pi=\set{\agents}$ and  $\pi'=\set{\agents\setminus\set{j}, \set{j}}$ are optimal, for some $j \in \agents$, an optimal mechanism $\mech$ returns $\pi$ rather than $\pi'$. 
    
    Consider an instance where there are three agents, namely, $1,2,3$, and $w_{12}=-x$ while $w_{13}=1$. If $1$ truthfully reports $w_1$, then, due to the aforementioned tie-breaking rule, the mechanism may return the grand coalition as the outcome of the game, which guarantees $1$ a negative utility. Notice that this may not necessarily be the worst outcome for $1$, but we can infer that, in the worst case, $1$ gets a negative utility. However, if $1$ reports $d_{12}=d_{13}=-x$, in every optimal partition, $i$ will be put in a singleton and obtains a utility equal to $0$, violating Condition~\ref{NOM:inf} of \Cref{def:NOM}. 

    In conclusion, not every optimal mechanism is NOM.
\end{example}

\section{Conclusions and Future Work}
We explored the design of mechanisms for wide classes of succinctly representable classes of HGs that would be robust at least against manipulations of bounded rational agents and return coalition structures of good quality. We followed recent literature and considered NOM mechanisms, and we proved that these mechanisms can always guarantee non-manipulability (by agents lacking contingent reasoning) and optimality, at least in the continuous case. We also proved that if one requires computational efficiency, we can compute non-manipulable outcomes that approach the best possible approximation achievable in polynomial time. We also consider the discrete values showing a preliminary characterization of optimal and NOM mechanisms.

From a technical viewpoint, it would be clearly interesting to address the questions left open in this work: to close the gap between the best approximation achievable by a NOM mechanism in FHGs, and to extend the optimality characterization to other discrete variants of ASHGs and FHGs.

Another intriguing future challenge is to understand to what extent the results and techniques defined herein can be extended beyond ASHG and FHG, either to other classes of hedonic games, such as B-games~\cite{cechlarova2003computational}, W-games~\cite{cechlarova2004stable}, or distance HGs~\cite{flammini2021distance}, or even to generalizations of hedonic games, such as the group activity selection problem~\cite{darmann2012group}.
Moreover, considering the design of mechanisms that are not manipulable by bounded rational agents beyond NOM represents an important future work.
Indeed, as discussed above, NOM considers a specific way, for an agent lacking contingent reasoning, to aggregate the possible realizations of the game. Hence, it is natural to ask what may happen with other ``simple'' aggregation rules.

\section*{Acknowledgments}
This work was supported by: the PNRR MIUR project FAIR -- Future AI Research (PE00000013), Spoke 9 -- Green-aware AI; INdAM -- GNCS Project, code CUP\_E53C24001950001.

\bibliographystyle{plainnat}
\bibliography{references.bib}
\end{document}